\documentclass[12pt]{article}

\usepackage[margin=1in]{geometry}
\usepackage{setspace}
\usepackage{booktabs}
\usepackage{multirow}
\usepackage{threeparttable}
\usepackage{amsmath}
\usepackage{amssymb}
\usepackage{graphicx}
\usepackage{array}
\usepackage{natbib}
\usepackage{xurl}
\usepackage{hyperref}
\usepackage{pdflscape}
\usepackage[skip=6pt]{caption}
\usepackage{adjustbox}
\usepackage{float}
\usepackage{amsthm}

\makeatletter
\let\primitiveinput\@@input
\makeatother
\newcolumntype{L}[1]{>{\raggedright\arraybackslash\hspace{0pt}}p{#1}}
\newcolumntype{C}[1]{>{\centering\arraybackslash\hspace{0pt}\exhyphenpenalty=10000}p{#1}}

\theoremstyle{plain}
\newtheorem{proposition}{Proposition}[section]

\theoremstyle{definition}
\newtheorem{assumption}{Assumption}[section]

\theoremstyle{remark}

\hypersetup{colorlinks=true,citecolor=blue,linkcolor=blue,urlcolor=blue}

\title{\textbf{Environmental Threat and the Nation:\\
Earthquake Risk, Distributive Priority, and Expressive Attachment}}
\author{Hector Galindo-Silva\thanks{Department of Economics,
Pontificia Universidad Javeriana.
Email: \href{mailto:galindoh@javeriana.edu.co}{galindoh@javeriana.edu.co}.}}
\date{\today}

\begin{document}

\maketitle

\begin{abstract}
This paper studies how long-run earthquake risk shapes national identity,
separating a distributive margin (national membership as a rule for allocating
scarce resources) from an expressive margin (pride, willingness to fight, and
affective attachment).  Linking World Values Survey respondents (1981--2022; 63
countries, 494 subnational regions) to subnational seismic-risk geography, I
find that people living closer to high-risk zones express stronger national
in-group orientation: more pride, more willingness to fight, and more priority
for nationals when jobs are scarce.  Family attachment and out-group hostility
do not rise, while religiosity increases in parallel.  The expressive margin is
conditional: the pride response is pronounced where state--religion alignment
and a cohesive religious field lend the symbolic infrastructure to cast
disaster as a shared national ordeal, and indistinguishable from zero where
they do not.  A
complementary design exploiting earthquakes between adjacent survey waves finds
no average short-run response, yet the response it does detect concentrates
among older, place-attached residents who cannot leave---consistent with
attitudes tracking a chronic, inescapable risk rather than single events.  Together, the results
point to a demand-side origin of national attachment: where a covariate shock
would overwhelm local and family insurance, people turn to larger communities
of protection and meaning---the nation and religion---a logic I formalize in a
simple social-interaction model.
\end{abstract}

\medskip
\noindent\textbf{Keywords:} earthquake risk, nationalism, distributive
priority, symbolic complementarity, World Values Survey.

\medskip
\noindent\textbf{JEL codes:} D91, Q54, Z12, Z13.

\newpage

\section{Introduction}

Large destructive environmental shocks unsettle more than material conditions.
They destroy homes and assets, disrupt routines, and make the boundaries of
mutual obligation more visible.  One possible margin of adjustment is identity:
the communities through which individuals interpret risk, define obligations,
and locate protection.  While existing work shows that earthquakes shape one such identity margin
---religiosity \citep{bentzen2019, belloc2016}---this paper asks whether
long-run earthquake risk also shapes other identity margins, with particular
focus on national identity.

The nation is a natural object of study because it combines two functions that
become salient under collective threat.  First, it is a practical membership
boundary: when jobs, aid, shelter, or public resources are scarce, national
membership can be invoked as a rule for allocation \citep{brubaker1992,
miller1995}.  Second, it can be a moral community: a collective ``we'' that
turns suffering into duty, sacrifice, resilience, pride, or willingness to
defend the country \citep{anderson1983}.  These two functions need not move
in the same way \citep{shayo2009}.  A jobs-priority rule requires a recognized
inside-outside boundary.  Expressive national attachment requires more: the
shock must be narrated as a threat to a morally meaningful national community.
Economics has largely studied how states build these functions from
above---through schooling, conscription, and public goods; I instead ask how
they are demanded from below, when an environmental threat makes smaller
communities of protection unreliable.

Both functions are visible in recent earthquake episodes.  After the February
2023 Turkey--Syria earthquake, acute shortages of shelter, emergency aid, and
reconstruction capacity made the distinction between Turkish nationals and
Syrian refugees politically salient as a rule for allocating scarce public
resources.  By contrast, the expressive margin appears in symbolic
reconstruction narratives after the 2011 T\={o}hoku earthquake and tsunami, the
2008 Sichuan earthquake, and the 2015 Nepal Gorkha earthquake, where public
idioms and monuments helped translate destruction into national resilience,
solidarity, and recovery.

I study this question by linking individual-level national-identity responses from the World
Values Survey (WVS) 1981--2022 to subnational measures of long-run
earthquake-risk geography, covering 63 countries and 497 matched subnational
regions, of which 494 enter the baseline national-core sample.
The baseline specification compares individuals surveyed in the same country
and year but living in regions with different distance to high-risk earthquake
zones.  The main outcomes include a national-core index and its three
components.  The index combines being very proud, willingness to fight, and
priority for nationals in scarce jobs.  Because these items capture
conceptually distinct responses, the components are also reported and
interpreted separately.

The paper's main result is that proximity to high-risk seismic zones is
associated with stronger national in-group orientation.  The standardized
effect on the national-core index is $-0.098$ standard deviations (negative because
the treatment is distance); because that composite averages the two margins I
treat as distinct, I read its components as co-primary.  The standardized
effects are $-0.052$ for being very proud, $-0.066$ for willingness to fight,
and $-0.075$ for jobs priority.

These findings, which condition on country-year fixed effects and geographic
controls for latitude and distance to the coast, are robust across a
wide set of specification checks---night-light intensity, richer individual
and geographic controls, ethnic-group fixed effects, earthquake-event lags,
and alternative seismic-risk measures.  Three sets of diagnostics provide
evidence consistent with the main identifying assumption: conditional on the
controls, within-country variation in seismic-risk geography is unrelated to
other determinants of national identity.  Distance to high-risk zones predicts
realized earthquake exposure; the national-identity pattern is not reproduced
by proximity to tsunami, storm, or volcanic hazards; and household and
regional characteristics are balanced across seismic-risk exposure.  Together, these exercises support
 interpreting the estimate as a seismic-risk effect
rather than a demographic or geographic artifact---one that observed confounders
do not explain and that plausible unobserved confounders are unlikely to
overturn.

To distinguish this long-run relationship from attitudinal adjustment following
realized earthquakes, I complement the baseline with a stacked
repeated-cross-section design.  For each district observed in adjacent WVS
fieldwork periods, the design stacks independent samples of respondents from
the earlier and later endpoints and compares the change in survey outcomes in
districts that experience a qualifying earthquake between surveys with the
contemporaneous change in untreated districts in the same country-wave pair.
District-transition fixed effects absorb characteristics that remain constant
across the two endpoints.  The design detects no robust positive post-event
shift between survey endpoints in the
national-core index, very proud, jobs priority, or religiosity. This specification estimates a different object from
the baseline---the response associated with one or more qualifying earthquakes
between survey endpoints rather than the
association with a persistent hazard environment---so the two sets of
estimates need not coincide.  Yet the average null conceals a revealing
pattern: a realized earthquake does move attitudes, but only among older,
place-attached residents---those least able to escape the permanent risk it
makes salient---and not among the mobile young. This is the dynamic signature
the chronic-risk account predicts, and it reconciles the two designs: a
realized earthquake moves national identity precisely where, and for whom, the
latent risk is binding.  Together, these results are consistent with national
orientations developing through the sustained salience of chronic risk rather
than adjusting mechanically after each earthquake.

The composition of this long-run relationship provides a second clue about its
interpretation.  It is concentrated in national and religious outcomes rather
than in social attachments generally.  Family importance and trust are near
zero, whereas religiosity rises near seismic risk.  Ethnic-boundary and out-group
diagnostics---rejection of different-race or different-language neighbors,
generalized distrust, and attitudes toward ethnic diversity---show no parallel
rise.  Thus, seismic-risk geography is associated with the national and
religious domains without producing a general strengthening of close
attachments or hardening of group boundaries.  

To make sense of these patterns, I propose a conceptual framework based on covariate risk. When environmental threat is highly covariate---meaning a major disaster damages many nearby households simultaneously---local mutual-aid networks become a less reliable hedge. Because such a shock hits insurer and insured alike, local exchange has limited capacity to absorb the loss. The distributive channel fills this void by making national membership an allocation rule for scarce
resources---jobs, aid, shelter---that flow through the state; wherever the
state is the residual provider, the national-local boundary becomes salient.
The expressive channel translates destruction into a collective national
ordeal---resilience, duty, sacrifice, pride---but only where a symbolic
vocabulary is available to do the narrating.  Religious institutions lend it:
their repertoires of duty, sacrifice, and protection can be attached to the
nation, not by fusing religion and nation but by supplying symbolic
infrastructure.  Those repertoires are more available where
national and religious institutions point toward the same collective community
and the religious field is cohesive.  As institutional alignment increases and
fractionalization decreases, the expressive channel activates and the two
channels converge.  A simple social-interaction model formalizes this
intuition: individuals allocate effort across local insurance, distributive
national priority, expressive attachment, and religious coping, and covariate
risk depresses the return to the local margin.

A further set of results examines the two-channel prediction directly.
State--religion alignment amplifies the expressive association more clearly than
the jobs-priority response: the pride coefficient is large where alignment is
high and indistinguishable from zero where it is low, a contrast that survives
country-level wild-bootstrap inference and equal country weighting.  Religious
fractionalization likewise attenuates the expressive outcomes, whereas ethnic
fractionalization does not display a comparable pattern.  These estimates are
ordered as the proposed mechanism predicts, although formal cross-margin
comparisons remain imprecise. Taken together, these patterns fit less
directly with general-insecurity, ethnic-threat, or top-down authoritarian
mobilization accounts, each of which would predict broader or differently
moderated responses than the evidence shows.

\bigskip

This paper contributes first to the economics of nation-building, which has
studied national identity largely from the supply side: states and
elites deliberately produce it---through schooling, conscription, propaganda,
and public goods---to wage mass war \citep{alesinareichriboni2020}, to
homogenize and govern diverse populations \citep{alesinagiulianoreich2021}, or
to consolidate newly acquired territory \citep{dehdarigehring2022}.  I document
a complementary demand-side origin: where chronic covariate risk makes
local and family insurance unreliable, individuals turn to larger communities
of protection---the nation and religion---without any elite acting to supply
national feeling.  The two margins I separate map onto the two functions this
literature assigns the modern nation-state: a distributive function (membership
as a claim on collective resources) and an expressive function---the motivation
to sacrifice, indeed the very willingness to fight that
\citet{alesinareichriboni2020} place at the center of mass-army nationalism.  My
institutional moderators then identify when the expressive margin
activates from below, a demand-side counterpart to that literature's account of
when top-down nation-building succeeds.

The paper also engages several adjacent literatures.  Within the economics of
culture and identity, it builds on the identity-economics foundations of
\citet{akerlof2000} and \citet{shayo2009} to study how environmental risk
shapes the composition of national attachment, separating distributive from
expressive nationalism as analytically distinct objects
\citep{guiso2006, alesina2015, giulianonunn2021}.  The covariate-risk
mechanism connects to the informal-insurance literature on covariate shocks
\citep{townsend1994, morduch1995, fafchamps2003}.  Within the
literature on natural disasters and cultural change, the paper extends the
seismic-risk design of \citet{bentzen2019} to national identity, complements
\citet{winkler2021}'s evidence that disasters tighten social norms, and
connects to \citet{belloc2016}'s finding that earthquakes shaped the political
power of religious institutions in medieval Italian cities.

In the political economy of religion and nationalism, it documents that
institutional integration between religious and national authority shapes
which channel of nationalism responds to covariate risk.  The state--religion
alignment moderator maps directly onto the market-concentration mechanism of
\citet{barromccleary2005}, in which monopoly religious institutions fuse most
readily with national narratives; the broader reading of religion as symbolic
infrastructure for national community draws on \citet{bellah1967},
\citet{casanova1994}, \citet{berman2000}, \citet{mcclearybarro2006},
\citet{cosgelmiceli2009}, and \citet{rubin2017}.  Finally, it speaks to the
debate on whether collective shocks generate solidarity or exclusion
\citep{enke2019, miguel2004, voors2012, bauer2016, cassar2017, hanaoka2018,
depetris2020}: the evidence points to a targeted national allocation
preference and a conditionally activated expressive attachment, not a broad
rise in out-group hostility or a general social-cohesion effect---a two-margin
distinction that connects to models of identity and political conflict
\citep{shayo2009, bonomi2021}.  The closest antecedent on threat and
large-scale identity is \citet{gehring2022}, who shows that the external
military threat from Russia's invasion of Ukraine strengthened European Union
identity.  Environmental threat differs on a dimension that matters for
theory: an earthquake supplies no human antagonist.  The setting therefore
isolates how a membership boundary forms in response to an impersonal
threat---and the out-group diagnostics below show that it can form without
hostility toward any out-group.

The paper is organized as follows.  Section~\ref{sec:data} describes the data
and outcome construction.  Section~\ref{sec:strategy} presents the
identification strategy.  Section~\ref{sec:results} reports the main results
and robustness checks.  Section~\ref{sec:mechanisms} develops the mechanism,
tests its implications, and discusses alternatives.  Section~\ref{sec:conclusion}
concludes.

\section{Data and Measurement}\label{sec:data}

The analysis combines two main data sources: individual-level survey responses on national identity and subnational measures of long-run earthquake-risk geography. The survey data come from the WVS Trends 1981--2022 file \citep{haerpfer2022wvstrend}, which harmonizes key variables across waves and preserves the region identifiers needed to locate respondents.\footnote{The World Values Survey time-series data can be consulted through the WVS data portal at \url{https://www.worldvaluessurvey.org/WVSDocumentationWVL.jsp}.} The estimation sample spans 63 countries, providing between 145,000 and 166,000 observations for the main outcomes.

I use three WVS measures for the main analysis: an indicator for being very proud of one's nationality, willingness to fight for the country, and jobs priority for nationals.\footnote{The exact harmonized WVS items are E012 (``Willingness to fight for country''), which I code as one for ``yes,'' and C002 (``Jobs scarce: Employers should give priority to (nation) people than immigrants''), coded as one for agreement. For national pride, I use item G006 (``How proud of nationality''), whose original responses range from 1 (``very proud'') to 4 (``not at all proud''). In the weighted matched sample, 62.7 percent report being very proud and 27.6 percent quite proud. Because a dummy combining these two categories equals one for 90.3 percent of respondents---leaving little remaining variation---I code G006 as one strictly for ``very proud'' to capture meaningful variation in the intensity of national attachment.} I analyze these items both separately and averaged into a composite ``national-core index,'' which is observed when at least two components are non-missing. Because these items capture related but distinct aspects of national orientation rather than repeated indicators of a single latent trait, the index serves as a formative summary.\footnote{Because the index is formative rather than reflective, internal-consistency statistics such as Cronbach's alpha are not the appropriate yardstick: the low inter-item correlation ($\alpha=0.34$, with pairwise correlations of $0.11$--$0.21$) is expected when the components are distinct constituents of national orientation rather than interchangeable measures of a single latent trait, and a high value would instead indicate redundancy. The composite accordingly carries no independent inferential weight---each component is reported and signed separately, and the multiple-testing-adjusted evidence rests on the components themselves, so its greater precision reflects the pooling of a common signal rather than artificial variance reduction.} Consequently, I report the index alongside each individual component to distinguish the broad national response from its specific functional margins.

Conceptually, these components map directly to the theoretical framework. Jobs priority asks whether national membership should govern access to scarce employment, serving as the practical distributive outcome.\footnote{This item connects to the welfare chauvinism literature, which documents preferences for reserving jobs and benefits for co-nationals \citep{vanoorschot2006, eger2010, reeskens2012}. Although the paper studies long-run seismic-risk geography rather than immigration exposure, the mechanisms may overlap---jobs priority may partly capture material boundary-drawing---so it is treated as a distinct co-primary outcome and probed directly in Section~\ref{sec:alternatives}.} Conversely, being very proud and willingness to fight capture two manifestations of expressive national attachment: affective intensity and costly commitment.

To validate the main results and explore alternative mechanisms, I construct several auxiliary and diagnostic measures from the WVS. First, to assess broader national orientation, I use a Lan--Li-style index \citep{lanli2015} alongside indices for economic/exclusionary nationalism and country-over-world identity.\footnote{The economic/exclusionary index averages jobs priority, restrictive immigration policy, rejection of immigrant neighbors, and distrust of other nationalities. Country-over-world identity subtracts world-citizen identity from country-citizen identity. The additional WVS item labels are: E001/E002, ``Aims of country: first/second choice,'' coded for selecting ``strong defence forces''; E069\_02, ``Confidence: Armed Forces''; E143, ``Immigrant policy,'' with values from ``let anyone come'' to ``prohibit people from coming''; A124\_06, ``Neighbours: Immigrants/foreign workers''; G007\_36\_B, ``Trust: People of another nationality''; G019, ``I see myself as a world citizen''; G021, ``I see myself as citizen of the [country] nation''; and G005, ``Citizen of [country].'' All indices are coded so that higher values represent stronger national in-group orientation, stronger exclusionary orientation, or stronger country-over-world identity.} Second, to test the theoretical boundaries of the covariate-risk mechanism, I evaluate outcomes related to family attachment (importance, trust, and making parents proud) and religiosity (importance, practice, and belief).\footnote{The family item labels are A001, ``Important in life: Family''; D001\_B, ``How much you trust: Your family (B)''; D001, ``How much do you trust your family''; and D054, ``One of main goals in life has been to make my parents proud.'' The religiosity item labels are F063, ``How important is God in your life''; F034, ``Religious person''; F028, ``How often do you attend religious services''; F050, ``Believe in: God''; F051, ``Believe in: life after death''; and F064, ``Get comfort and strength from religion.''} Finally, to rule out generalized social distrust and assess state legitimacy, I incorporate measures of institutional confidence\footnote{The WVS institutional confidence block asks about confidence in specific institutions on a four-point scale. The item labels used here are E069\_01, churches; E069\_02, armed forces; E069\_06, police; E069\_07, parliament; E069\_08, civil services; E069\_09, social security system; E069\_11, government; E069\_17, justice system/courts; and E069\_20, the United Nations.} alongside ethnic-boundary and trust diagnostics.\footnote{Because the WVS lacks a direct repeated cross-national measure of ethnic-identity salience, I treat the available ethnic-boundary and trust variables as diagnostics. These include whether ethnic diversity is seen as eroding unity, rejection of different-language/race/foreign-origin neighbors, and generalized distrust (item labels: G032, A124\_43, A124\_02, A124\_16, G007\_35\_B, and A165).}

The second main data source consists of district-level geographic exposure measures to seismic hazard, built from administrative boundary polygons and global hazard layers. The primary exposure variable is the geodesic distance from each district's administrative boundary---delimited using ESRI district shapefiles---to the nearest polygon of Modified Mercalli zones 3--4. These high-risk zones are defined in the Global Seismic Hazard Assessment Programme (GSHAP) database, maintained by UNEP/GRID under the UN International Decade for Natural Disaster Reduction. To map these hazard polygons to subnational WVS regions in a consistent cross-country format, I use the harmonized district-level exposure measures and region-label crosswalk assembled by \citet{bentzen2019}.\footnote{The replication materials for \citet{bentzen2019} are available from \textit{The Economic Journal} website (\url{https://academic.oup.com/ej/article/129/622/2295/5490325\#supplementary-data}). Only the geographic exposure variables---distance to high-risk seismic zones, absolute latitude, distance to the coast, and the other district geography used below---are taken from this source; all national-identity outcomes and individual controls are constructed directly from the WVS time-series file.} 

In the regressions, distance is measured in 1,000 kilometers, so larger values indicate lower exposure to seismic risk. Figure~\ref{fig:map} shows the 497 matched region centroids, colored by this distance, over the global footprint of USGS magnitude-$\geq$5.5 epicenters. Because the baseline exposure variables are time-invariant geographic measures, I assign them to all matched WVS respondents within a region. As complements to these geographic measures, I also incorporate alternative hazard distances and geographic characteristics for specificity, balance, and robustness checks.\footnote{These include distances to tsunami zones, volcanoes, storm tracks, and fault lines; night-light intensity from the DMSP Operational Linescan System for 2000 \citep{elvidge1997dmsp}; population density from the LandScan global population grid for 2000 \citep{dobson2000landscan}; elevation and terrain ruggedness from the Shuttle Radar Topography Mission \citep[SRTM;][]{farr2007srtm,nunnpuga2012}; mean annual temperature for 1961--1990 from WorldClim \citep{hijmans2005worldclim}; arable land from FAO land-classification data; distance to the national capital, computed as the geodesic distance from each district's administrative centroid to the country's capital city using Natural Earth populated-place coordinates; and distance to the nearest international land border, computed as the geodesic distance from each district centroid to the nearest point on the country's border polygon from Natural Earth administrative boundary data. Both are measured in 1,000 kilometers.} 

The baseline design exploits long-run seismic-risk geography rather than realized earthquake events. However, I use historical event data both to validate that the distance measure predicts actual earthquake histories and to provide complementary timing evidence through a stacked repeated-cross-section design.\footnote{This design compares independent respondent samples at the two endpoints of the same district transition and codes whether one or more qualifying earthquakes occurred during the actual fieldwork interval between WVS waves; the design is described in Section~\ref{sec:strategy_event}. The underlying USGS catalog can be queried at \url{https://earthquake.usgs.gov/earthquakes/search/}.}

To account for country-level characteristics and test the mechanisms proposed in the theoretical framework, I augment the dataset with several country-level indicators. GDP per capita is drawn from the World Development Indicators \citep{worldbank2026wdi}, while state capacity and democracy are measured using V-Dem, Polity5, and the Worldwide Governance Indicators \citep{vdem2026, marshallpolity2018, worldbank2026wgi}. To capture the theoretical mechanisms of coordination friction and symbolic synergy, I use time-invariant ethnic and religious fractionalization indices from \citet{alesina2003}\footnote{Specifically, I take the Alesina religious and ethnic fractionalization measures from the Quality of Government Standard Cross-Section \citep{teorellqog2024}. As a complement, I also use the Historical Index of Ethnic Fractionalization (HIEF), an annual country-level panel covering 1945--2013 \citep{drazanova2020}; unlike the time-invariant Alesina index, it varies within country across WVS waves and serves as a time-varying ethnic fractionalization check.} and state--religion alignment measures from the Religion and State (RAS) Project \citep{arda_ras3, arda_ras_constitutions2022}. Finally, I draw on event histories from the Global Disaster Identification Numbers (GDIS) and the Emergency Events Database (EM-DAT) to validate realized exposure and to construct revealed-disaster-response moderators.\footnote{GDIS is available through NASA SEDAC at \url{https://sedac.ciesin.columbia.edu/data/set/pend-gdis-1960-2018-disasterlocations}; EM-DAT through the Centre for Research on the Epidemiology of Disasters at \url{https://www.emdat.be/}. The validation exercise assigns GDIS earthquake locations to nearby WVS/Bentzen regions, while the heterogeneity exercises use the same sources for macro-moderators.}

Table~\ref{tab:coverage} reports variable-level coverage in the matched WVS--geographic data, showing how sample size varies across outcomes, moderators, and waves. Regression samples in later tables are slightly smaller when baseline controls are required to be non-missing.

\section{Identification Strategy}\label{sec:strategy}

\subsection{Baseline Design}

The first of the paper's two empirical designs exploits within-country cross-sectional variation in environmental threat \citep{bentzen2019}. Specifically, it compares individuals surveyed in the same country and year, but residing in subnational regions with different long-run exposure to seismic risk:
\begin{equation}
    y_{irct} = \beta \, \text{DistEQ34}_{r} + X'_{irct}\Gamma + \alpha_{ct} + \varepsilon_{irct},
    \label{eq:baseline}
\end{equation}
where $y_{irct}$ is an outcome for individual $i$ in subnational region $r$, country $c$, and survey year $t$. The variable $\text{DistEQ34}_{r}$ is the distance, in 1,000 kilometers, from the region's administrative boundary to the nearest high-risk earthquake zone (Modified Mercalli zones 3--4). The vector $X_{irct}$ contains individual and geographic controls, and $\alpha_{ct}$ denotes country-by-year fixed effects. The baseline controls include age, age squared, indicators for being male and married, absolute latitude, and distance to the coast. Regressions use WVS survey weights, and standard errors are clustered at the subnational-region level.  The coefficient $\beta$ captures the cross-region gradient in attitudes with respect to long-run seismic geography. Because the treatment variable is a distance measure, a negative coefficient implies that national in-group orientation is stronger in communities located closer to high-risk earthquake zones.\footnote{The matched data contain 497 subnational regions; the national-core estimation sample has 494 regions in 63 countries, yielding 145,000--166,000 respondents across components. Because $\text{DistEQ34}_{r}$ varies only across regions, identifying variation is at the region level. Of the 63 countries, 55 (and 136 of 171 country-year cells, 90 percent of respondents) have within-country treatment variation; restricting to these varying cells, collapsing to region--country-year averages, or weighting every country equally leaves the composite coefficient intact.}

The main identifying assumption of (\ref{eq:baseline}) is that, within a
country-year and conditional on the baseline geographic controls, distance to
high-risk seismic zones is uncorrelated with unobserved determinants of
national in-group orientation.  The design is attractive because seismic-hazard
zones are determined by deep tectonic structure and are fixed on the timescale
relevant for modern survey attitudes.  Country-year fixed effects absorb
national institutions, common media environments, national political shocks,
survey-wave composition, and any other factor shared by all respondents in the
same country and year.  Identification therefore comes from comparing regions
inside the same national context, rather than from cross-country differences in
history, income, religion, or political institutions.  Latitude and distance to
the coast further absorb broad spatial patterns that could otherwise connect
earthquake geography to climate, settlement, market access, or coastal
development.

\subsection{An Event-Study Complement}\label{sec:strategy_event}

While the baseline design identifies a long-run association with chronic seismic risk, I complement it with a stacked repeated-cross-section design to test whether mean attitudes shift by the next observed survey endpoint after a realized earthquake. For each district observed in two consecutive survey waves, I stack independent respondent samples from the earlier and later endpoints and estimate:
\begin{equation}
    y_{ir\tau s} = \delta\left(\text{Post}_{s}\times\text{EQ}_{r\tau}\right) + X_{ir\tau s}'\Gamma + \alpha_{r\tau} + \eta_{c\tau s} + \varepsilon_{ir\tau s},
    \label{eq:eventstudy}
\end{equation}
where $s\in\{0,1\}$ denotes the earlier or later endpoint of district transition $\tau$, $\alpha_{r\tau}$ are district-transition fixed effects, and $\eta_{c\tau s}$ are country-wave-pair-by-endpoint fixed effects. The vector $X_{ir\tau s}$ contains baseline demographic controls (age, age squared, sex, and marital status). The indicator $\text{Post}_{s}=\mathbf{1}[s=1]$ denotes the later endpoint of the transition. The treatment variable, $\text{EQ}_{r\tau}$, equals one if the district experienced at least one qualifying earthquake between the two survey waves, and zero otherwise.\footnote{Specifically, $\text{EQ}_{r\tau}$ equals one if at least one USGS magnitude-$\geq$5.5 earthquake occurred within 150 km of the district centroid strictly after fieldwork for the earlier wave ended and strictly before fieldwork for the later wave began. This interval-specific definition avoids assigning earthquakes already reflected in the earlier survey or omitting events that occurred during long gaps between waves. Regressions use WVS respondent weights, and inference relies on wild-cluster-bootstrap $p$-values clustered at the country level.}  Because the main effect of $\text{EQ}_{r\tau}$ is absorbed by the transition fixed effects $\alpha_{r\tau}$, and the main effect of $\text{Post}_{s}$ is absorbed by $\eta_{c\tau s}$, Equation~\eqref{eq:eventstudy} recovers the difference-in-differences coefficient $\delta$. This specification compares differences in mean outcomes across independent respondent samples at the two endpoints, between districts that did and did not experience one or more qualifying earthquakes within the same country-wave pair. It retains respondent-level observations rather than collapsing them into noisy cell means.

The primary identifying assumption is conditional parallel trends: absent a realized earthquake, mean attitudes in treated and untreated districts within the same country-wave pair would have evolved in parallel. This assumption is plausible because earthquake timing is governed by deep geology rather than social attitudes or survey calendars. Furthermore, the transition fixed effects absorb all time-invariant district characteristics, while the country-wave-pair-by-endpoint fixed effects absorb any country-wide attitudinal shifts between waves.\footnote{The event-study sample comprises 641 district transitions across 43 countries, of which 120 transitions (in 19 countries) are treated. Survey-year gaps between endpoints range from two to eleven years, with a median of six years; the pooled coefficient should therefore be interpreted as a between-wave post-event shift rather than an immediate response. Because the treated sample is moderate, the design's statistical power to detect post-event shifts of the cross-sectional magnitude is limited---a caveat discussed alongside the results.}  Crucially, I use this design not as a substitute for the cross-sectional analysis, but as a conceptual bound for the theoretical framework. As formalized in Section~\ref{sec:mechanisms}, if nationalist orientations develop as institutional adaptations to the sustained salience of chronic covariate risk, one or more qualifying earthquakes between two survey waves need not produce a detectable discrete shift in the average respondent---though the same logic predicts that an event should move those most bound to the now-salient permanent risk, a conditional prediction I test directly in Section~\ref{sec:results_event}.

\subsection{Threats to Identification and Diagnostic Evidence}

The two empirical designs rely on distinct sources of variation and therefore face different threats to identification. Separating them elucidates their respective identifying assumptions. I first address the cross-sectional relationship estimated by Equation~\eqref{eq:baseline}, which must confront time-invariant confounding and spatial sorting. I then discuss the realized-event design of Equation~\eqref{eq:eventstudy}, which absorbs time-invariant district-transition heterogeneity while comparing repeated cross-sections at two survey endpoints, but relies on a conditional parallel-trends assumption. Throughout, the diagnostics previewed below are fully reported alongside the main results in Section~\ref{sec:results}.

For the cross-sectional design, the primary threat is omitted variable bias: seismic exposure might correlate within countries with unobserved geographic or institutional features that independently shape national identity. To address this, the baseline specification conditions on country-by-year fixed effects, coastal distance, and latitude. Formal balance tests across a broad vector of potential demographic, economic, and geographic confounders show that most observed characteristics are comparable across high- and low-risk regions, with terrain ruggedness as the clear exception (see Online Appendix Table~\ref{tab:covariate_balance}). The estimated relationship survives further spatial controls and a battery of placebo-hazard tests.\footnote{As shown in Section~\ref{sec:results}, the coefficient remains stable after controlling for ruggedness, night-light intensity, and richer individual covariates. Similar results arise with alternative seismic-risk measures but not with proximity to tsunamis, storms, or volcanoes.} These exercises cannot eliminate unobserved confounding, but they rule out several direct demographic and geographic explanations.

A related concern is endogenous spatial sorting: individuals with inherently stronger national attachments might systematically migrate to (or remain in) high-risk areas. Because seismic risk is determined by deep tectonic structures that predate human settlement, historical sorting is difficult to test directly. However, restricting the sample to plausibly lower-mobility groups---specifically, older cohorts and agricultural workers---and absorbing ethnic-group composition leaves the core and jobs-priority estimates similar in sign and magnitude. The sign of the heterogeneity also runs against a sorting account: mobility-based sorting would concentrate the association among the mobile, yet it is the least mobile---older cohorts here and, in the realized-event design, place-attached residents --who drive it. This reduces, but does not eliminate, concern about contemporary self-selection.

A third concern regarding the cross-section relates to construct validity rather than confounding. Because distance measures long-run structural hazard rather than realized shaking, classical measurement error could bias the estimates toward zero. A treatment-validation exercise supports the construct validity of the measure: proximity to high-risk zones is associated with the incidence, frequency, and affected population of realized damaging earthquakes over the preceding twenty years (see Table~\ref{tab:treatment_validation}). The geographic measure therefore captures meaningful variation in realized exposure rather than serving as a mere cartographic artifact.

District-transition fixed effects address these time-invariant concerns, but
the stacked repeated-cross-section design requires a different identifying
assumption: conditional parallel trends. Because earthquake timing is
governed by geology rather than survey calendars, direct anticipation is less
plausible than in settings with policy-assigned treatment.
Instead, the primary threats are post-event compositional shifts---such as
mortality or selective displacement---and violations of strict exogeneity. To
probe the latter, I implement an equal-length
future-event placebo check.  Future earthquakes do not predict the preceding
change for the core index, very proud, jobs priority, family importance, or
generalized trust; willingness to fight is the exception, with a marginal
placebo coefficient that warrants caution when interpreting that outcome.

\section{Main Results}\label{sec:results}

\subsection{Baseline Results}\label{sec:results_baseline}

Table~\ref{tab:main_results} reports the baseline estimates from Equation~\eqref{eq:baseline}. Column~(1) shows that the coefficient for the national-core index is $-0.049$ (se~$=0.018$, standardized $\hat\beta=-0.098$, $p=0.007$). Columns~(2)--(4) unpack this composite into its constituent margins, showing that the core-index result is not driven by a single item. All three components yield negative coefficients: being very proud is marginally significant ($-0.040$, se~$=0.024$, $\hat\beta=-0.052$), willingness to fight is significant at the 5 percent level ($-0.047$, se~$=0.021$, $\hat\beta=-0.066$), and jobs priority is $-0.053$ (se~$=0.023$, $\hat\beta=-0.075$, $p=0.023$).\footnote{Because the core index averages conceptually distinct margins, jobs priority and the expressive components are reported and interpreted as co-primary outcomes in their own right; the composite result replicates across strict, PCA, component-pair, and item-specific index constructions (Table~\ref{tab:core_validation}). Section~\ref{sec:data} discusses the index's formative nature and why internal-consistency statistics such as Cronbach's $\alpha$ do not apply.} As theoretically anticipated, being very proud and willingness to fight provide separate evidence on affective intensity and costly national commitment,\footnote{The modest unconditional estimate for national pride anticipates the mechanism tests in Section~\ref{sec:mech_evidence}: it masks a large negative coefficient where state--religion alignment supplies symbolic infrastructure---robust to country-level inference and equal country weighting---and an estimate indistinguishable from zero where it does not, precisely the conditionality the model predicts for the expressive margin.} while jobs priority captures the practical allocation margin.

The sign and magnitude of these estimates are important. Since the exposure variable is distance from high-risk seismic zones, the negative coefficients indicate that national in-group orientation is stronger closer to seismic risk. To gauge magnitude, a one-standard-deviation increase in distance from high-risk zones (about 630 km) is associated with a $0.098$ standard deviation lower national-core index and a $0.075$ standard deviation lower jobs-priority response---roughly a 3-percentage-point lower probability of prioritizing nationals for scarce jobs.

Furthermore, the cross-sectional association is concentrated in earlier survey waves. Interacting distance with a continuous survey-year trend yields a positive and significant coefficient, indicating that the negative distance effect attenuates over time (Table~\ref{tab:temporal_heterogeneity}). As detailed in the Appendix, this attenuation is not an artifact of changing WVS country coverage: it replicates within a balanced set of 34 countries and persists when country-specific distance slopes are included.\footnote{The attenuation is not solely generated by turnover in WVS country coverage: it also persists among low-mobility respondents---those aged 50 and over and agricultural workers, least likely to have migrated---so it reflects within-population change rather than selective out-migration or urbanization, with the distance-by-year interaction remaining positive and, among those aged 50 and over, significant for the core index and jobs priority. These specifications do not, however, hold the composition of respondents or sampled regions perfectly fixed over time.} While adjudicating its exact drivers is beyond the scope of the data, the pattern is consistent with the association weakening as countries urbanize, develop, and secularize, or as the salience of the specific WVS nationalism items shifts across cohorts.

\subsection{The Realized-Event Response}\label{sec:results_event}

The baseline design identifies how chronic seismic geography maps into attitudes.
The stacked design in equation~\eqref{eq:eventstudy} asks the narrower
question of whether mean attitudes differ at the later survey endpoint when a
realized earthquake occurs between two fieldwork periods.
Table~\ref{tab:event_study_usgs} reports the pooled
estimates for the national-core index and its three components.  No outcome
rises: the national-core index ($-0.007$), very proud ($-0.030$), and jobs
priority ($+0.039$) are statistically indistinguishable from zero under
wild-cluster-bootstrap inference, and the religiosity benchmark in column~(5)
is equally unmoved ($-0.001$, $p=0.94$).  The one marginal coefficient is
willingness to fight, which falls by $0.044$ ($p_{\mathrm{wcb}}=0.053$);
because an equal-length future-event placebo for this outcome is similar in
magnitude and marginal as well, I do not read it as a causal
post-event effect.  The table therefore provides no evidence that a realized
earthquake strengthens national identity or religiosity by the next observed
survey endpoint.  Its
confidence intervals, however, do not rule out modest positive responses.\footnote{Table~\ref{tab:event_study_dynamics} adds the timing profile: among transitions with survey gaps of at most six years, jobs priority rises in the first two years after an earthquake, fades by two to four years, and is essentially zero thereafter, while the expressive coefficients are never positive at any horizon and willingness to fight is, if anything, negative early on. None is a precisely estimated post-event effect.}

These estimates are consistent in sign with an early but transitory
distributive response to realized scarcity, though none survives as a precisely
estimated post-event effect. The pooled null, however, masks a sharp and theoretically informative
heterogeneity. Table~\ref{tab:place_attachment} re-estimates
equation~\eqref{eq:eventstudy} interacting the treatment with respondent age.
Within the structurally high-hazard districts where qualifying earthquakes
occur, the between-wave response is concentrated among older, place-attached
residents: relative to younger respondents in the same transition, those aged
fifty and above shift toward the national-core index by $0.038$ and willingness
to fight by $0.045$ (equivalently $0.014$ and $0.017$ per decade of age), with a
parallel rise in jobs priority ($0.040$). Being very proud moves in the same
direction but more weakly. Because qualifying earthquakes occur almost exclusively in
high-hazard districts, this pattern is essentially unchanged in the full sample
(Panel~B); low-hazard areas contain too few treated transitions to identify a
separate response.

This heterogeneity reconciles the two designs. A realized earthquake moves
national identity not in the average respondent but in precisely those for whom
the latent, inescapable risk it makes salient is binding---older residents
rooted in place, who cannot exit the hazard---while the mobile young, who can,
do not respond. The event matters as a revelation of a permanent structural
condition rather than as a transient shock, which is why it leaves the
population mean near zero even as the cross-sectional association is large.
Section~\ref{sec:mechanisms} develops this reading, showing that identity tracks
the latent structural hazard rather than realized seismicity, and that
religiosity---which exhibits the same cross-sectional association---displays
neither an average nor an age-differential response to realized events.

\subsection{Identification and Robustness}\label{sec:results_robust}

This subsection collects the robustness and diagnostic evidence for both
designs, the cross-section first and the realized-event design second.  The
cross-sectional checks span outcome construction, specifications and controls,
hazard specificity, selection and influential observations, sample structure and
weighting, and spatial inference; the results are robust throughout and
consistent with the main identification assumption.

On outcome construction, Table~\ref{tab:alternative_indices} reports robustness to alternative
measures---the Lan--Li-style, economic/exclusionary, and country-over-world
constructions of Section~\ref{sec:data}: the Lan--Li measures are negative, the
narrower auxiliary measures less precise.  These are not additional primary
indices; the main analysis uses the national-core index and its three
components.

Table~\ref{tab:robustness_specs} examines specification
robustness across all Table~\ref{tab:main_results} outcomes (one per panel,
seven specifications): the baseline~(1); excluding high-risk-zone districts~(2);
and adding night-light intensity, a proxy for urbanization and economic
activity~(3); individual controls for generalized trust, unemployment, and
agricultural occupation~(4); a district-geography
battery~(5),\footnote{The battery comprises population density, arable share,
precipitation mean and variability, district area, terrain ruggedness,
elevation, and an indicator for positive disaster exposure.} earthquake-event
lags $t-2$ to $t-10$~(6); and distance to the national capital~(7).  Every coefficient is signed as predicted and all but
one are significant---the very-proud component when high-risk zones are
excluded.\footnote{Applying
multiple-testing corrections
jointly to the core index and its three components, the core index,
willingness to fight, and jobs priority survive the
Benjamini--Hochberg correction at the 5\% level; very proud survives at the
10\% level.  Under the more conservative Bonferroni correction, only the core
index survives at 5\%.  The four outcomes overlap by construction, so this is a
conservative family-wise diagnostic rather than four independent hypotheses
(see Table~\ref{tab:mht}).}  Columns~(3) and~(5), estimated on
essentially the full sample, reproduce the baseline core-index standardized
effect almost exactly ($-0.099$ and $-0.085$, against $-0.098$), so the result
does not depend on the baseline control set.  Column~(5) is especially
informative for confounding: terrain ruggedness is the only covariate imbalanced
across high- and low-risk regions after conditioning on country-year fixed
effects, latitude, and coastal distance (Table~\ref{tab:covariate_balance}, adjusted $p=0.005$ versus $p>0.17$ for the
rest), yet adding the full geography battery---ruggedness included---leaves the
national-core coefficient negative and significant, so this single imbalance does
not drive the main result.\footnote{The estimates are likewise insensitive to dropping WVS
survey weights (Table~\ref{tab:unweighted}) and stable under
the \citet{bentzen2019} specification checks, including a quadratic distance
term (Table~\ref{tab:bentzen_specs}).}

Two tables assess hazard specificity.  Table~\ref{tab:robustness_hazards} substitutes alternative seismic-risk measures
(each of the four outcomes in its own panel): relative to the baseline~(1),
distance to zone~4 alone, log distance, and fault-line distance~(2)--(4) are
negative throughout, with log distance significant for all four,\footnote{Two
cells are exceptions to statistical significance: the zone-4 estimate for jobs
priority and the fault-line estimate for very proud.  The mean earthquake-zone
indicator is small and insignificant for all outcomes.} while placebo distances
to tsunami zones and storm tracks are small and insignificant.  Table~\ref{tab:alternative_disasters} repeats this with non-seismic and
composite earthquake--tsunami measures and reaches the same verdict: earthquake
distance is the only exposure that consistently predicts national orientation;
tsunami, storm, and composite measures are null; and in both tables volcanic
proximity is significant only for willingness to fight.  This specificity is
consistent with the identification assumption: confounding by unobserved
geography correlated with seismic risk would produce similar associations for
geographically comparable hazards; it does not.

On selection and influential observations, absorbing ethnic-group composition
and restricting to lower-mobility respondents---older respondents and
agricultural workers---leaves the national-core coefficient essentially
unchanged (Table~\ref{tab:spatial_sorting}), so observable
composition does not point to contemporary self-selection as the main driver; a
leave-one-country-out exercise yields uniformly negative coefficients with no
reversals (Figure~\ref{fig:loo}).\footnote{\citet{oster2019} bounds further quantify robustness to selection on unobservables, including the historical sorting these sample restrictions cannot rule out directly. Relative to the rich control set (the geography battery and individual covariates), the degree of selection on unobservables relative to observables that would drive the coefficient to zero ranges from $\delta=5.4$ (jobs priority) to $\delta=28.7$ (willingness to fight)---far above the $\delta=1$ benchmark---and the bias-adjusted coefficient ($\delta=1$, $R_{\max}=1.3\,\tilde{R}$) remains negative for every baseline outcome.}

The estimation-sample structure is examined directly (Table~\ref{tab:aggregation_weighting}).  Collapsing to the roughly 1{,}200
region--country-year cells where the treatment varies reproduces the baseline
almost exactly, and restricting to the 136 country-year cells with identifying
variation changes nothing, so individual-level sample size plays no role.
Normalizing every country's total weight to one---so that no large-sample
country dominates---preserves the core index and willingness to fight, while
jobs priority and the religiosity benchmark weaken noticeably;
the composite is again the robust object, with component magnitudes partly
reflecting WVS country composition.\footnote{Collapsed to varying cells, the
core index is $-0.047$ against $-0.049$ at baseline; under equal country
weighting it is $-0.042$ (region-clustered $p=0.038$).}

Because the treatment is a spatially smooth distance observed at the 494 region
centroids, I also apply spatial-inference checks beyond region clustering.
Conley spatial-HAC standard errors (in brackets in
Table~\ref{tab:main_results}) are essentially identical to the region-clustered
ones at a 500~km Bartlett cutoff and at most eight percent larger at 1{,}000~km,
leaving every conclusion unchanged.\footnote{Two harsher diagnostics are collected in Tables~\ref{tab:spatial_inference} and~\ref{tab:kelly_noise}. Country-level clustering (63 clusters) raises standard errors by 25--45 percent, leaving the full-sample core index marginal ($p=0.065$; $0.092$ under a wild cluster bootstrap) and the component and \citet{bentzen2019} religiosity coefficients at comparable levels---so the religiosity and national coefficients face similar inference limits. Two patterns help interpret, but do not eliminate, this uncertainty. First, it is driven by recent waves: in the early period (1981--2005) the association is strong and survives country-level wild-bootstrap inference (Table~\ref{tab:spatial_inference}, Panel~B). Second, the precision loss does not occur for the expressive margin where the mechanism predicts a response: under high state--religion alignment the very-proud coefficient survives even country-level wild-bootstrap inference ($p_{\mathrm{wcb}}=0.046$; Section~\ref{sec:mech_evidence} and Table~\ref{tab:pride_regime}).}  Finally, a \citet{kelly2020} spatial-noise placebo replaces the treatment with
1{,}000 Gaussian random fields calibrated to its spatial covariance: it shows
that region-clustered inference over-rejects against spatially correlated noise
(9--17 percent at the nominal five percent), and provides a noise benchmark that
the core index clears at every calibration (noise-based $p=0.023$--$0.034$),
while the components and religiosity benchmark range $0.05$--$0.18$ (Online
Appendix Table~\ref{tab:kelly_noise}).

The remaining checks concern the realized-event design.  Its pooled conclusions
are stable across estimator choices---omitting individual controls, restricting
to transitions with at least 20 respondents per endpoint, or dropping WVS
weights leaves Table~\ref{tab:event_study_usgs} essentially unchanged (Online
Appendix Table~\ref{tab:event_study}).  The null is not an artifact of pride
coding: an indicator for being either very or quite proud shows, if anything, a
modest between-wave decline ($-0.027$, $p_{\mathrm{wcb}}=0.036$; Table~\ref{tab:event_study_pride_threshold})---no evidence of a positive
broad-pride response, though not proof that exposure itself causes a decline.  An
equal-length future-event placebo---earthquakes striking immediately after the
later fieldwork period---predicts no outcome except a marginal
willingness-to-fight coefficient ($p_{\mathrm{wcb}}=0.057$)---the same margin
whose between-wave coefficient is itself only marginal, from which I draw no
conclusion---while family importance and generalized trust, as outcome-side
placebos, show no response (Table~\ref{tab:event_study_placebos}).

\section{Mechanisms}\label{sec:mechanisms}

\subsection{A Chronic-Covariate-Risk Account}
\label{sec:mech_explanation}

Why should people who live closer to major seismic zones be more
nationalist---both in distributive and in expressive terms?  Two features of
the result discipline any answer.  First, as shown in Sections~\ref{sec:results_baseline}
and~\ref{sec:results_event}, the association is attached to the
place more clearly than to individual events: attitudes track the
modeled structural hazard---the deep-tectonic propensity encoded in the GSHAP
zones---whereas the realized-event estimates do not reveal a robust positive
between-wave post-event response.  What
needs explaining is therefore a chronic orientation: a standing hazard
apprehended through building codes, zoning, insurance, and the cultural
salience of inhabiting a dangerous place, absorbed into identity slowly
through socialization and accumulated experience.  An orientation of this kind
can behave as a stock rather than a discrete reaction to one or more events
during a survey interval.  This permanence is also why a realization adds
little: the binding hazard is the standing risk itself, which a household cannot
diversify away except by leaving, so what is internalized is the structural
propensity---and most by those rooted in place.  The account thus carries a
dynamic implication, tested in Section~\ref{sec:mech_evidence}: a realized
earthquake should move these immobile residents, not the mobile young who can
exit.

Crucially, this response works through salience under limited attention rather
than Bayesian updating: a local earthquake does not raise the objective
probability of an already-known chronic hazard so much as its attention weight.
In the model this is an increase in chronic-risk salience $S$ for the exposed,
and the comparative static in $S$ (Proposition~1) is positive---a higher
\emph{level} of national priority when the risk becomes salient. The static
framework therefore rationalizes a between-wave level shift without any temporal
dynamics, and the realized-event design is best read as a comparative-static
test in $S$, not as evidence of belief-updating about the hazard.

Second, national identity is not the only
attachment this geography predicts.  In the same sample, the same exposure
is also associated with higher religiosity, replicating \citet{bentzen2019}
(see column~(5) in Table~\ref{tab:main_results}); and
realized earthquakes leave religiosity just as unmoved as national attitudes
in the event design (see column~(5) in Table~\ref{tab:event_study_usgs}).
Thus, both identities display a long-run association without a detectable positive
event response, a pattern consistent with slowly accumulated orientations
attached to the same chronic hazard.  An account of the national result should
therefore also explain why these two large-scale identities are jointly
associated with chronic seismic exposure---and why comparable patterns do not
appear for other attachments.

My answer starts from the kind of risk earthquakes pose and the local
institutions that normally absorb risk.  Earthquake risk is highly
covariate: a major event would damage the assets, homes, and support
networks of many nearby households at once.  The institutions that buffer
idiosyncratic losses are not the ones that buffer shared losses
\citep{townsend1994, morduch1995}; in many settings, kinship and neighborhood
networks are important risk-sharing institutions that insure a
relative's illness or a household-specific income loss \citep{fafchamps2003}.
The hypothesis is that where risk is highly covariate, this local margin becomes
a less reliable hedge because the households that would normally help are
precisely those likely to be hit.  Larger communities of protection,
obligation, and meaning may therefore become relatively more salient---including the religious community and the
nation.
Appendix~\ref{app:theory} formalizes this as a social-interaction model in
which individuals allocate effort across local insurance, distributive
national priority, expressive national attachment, and religious coping, and
the equilibrium allocation depends on the covariance of risk and the
fragmentation of the religious field.  The logic immediately yields the
account's first and sharpest prediction: the response should be
selective.  The family and local margin should not strengthen---it is
the margin whose insurance capacity is most constrained---and out-group
boundaries need not harden generally.  This differs from a diffuse
threat-and-insecurity account under which attachment or exclusion would rise
across a broader set of social domains.  It is also the risk-based counterpart
to the family-versus-nation substitution documented by
\citet{alesinagiuliano2010}: strong local ties substitute for broader national
engagement, and covariate risk is what tips that substitution toward the
nation rather than away from it.

How, then, does the nation absorb the demand that local networks cannot meet?
It combines two roles that generate different predictions.  The first is
practical allocation: where jobs, aid, or public resources are scarce,
national membership defines who has priority \citep{miller1995}.  This
requires only a recognized boundary between members and non-members---not a
shared symbolic vocabulary (see Proposition~1 in the model)---which is why jobs
priority is the clearest direct measure of the practical margin in
Table~\ref{tab:main_results}: it can operate wherever scarcity meets a national
boundary without requiring the symbolic conditions developed below.  The 2023
Turkey--Syria earthquake is illustrative: shortages of shelter and aid made
the line between Turkish nationals and Syrian refugees politically salient as
an allocation rule \citep{syriadirect2023quake}.

Is all of nationalism distributive, then?  The second channel suggests not.
National pride and willingness to fight require the hazard to be narrated as
shared suffering, duty, and collective resilience
\citep{sabhlok2010, buchenau2009}---a symbolic operation closer to religious
identification, since both turn physical danger into a threat to a morally
meaningful collective ``we'' \citep{durkheim1912, anderson1983}.  After the
2011 T\={o}hoku earthquake, for instance, the \textit{Kizuna} (``bonds'')
campaign framed recovery as national resilience \citep{kantei2011kizuna}.
This expressive channel is institutionally conditional: it activates only
where a symbolic vocabulary is available to do the narrating, which is why
expressive outcomes move in the expected direction but vary more across
settings than jobs priority (see Propositions~2--3).

What supplies that vocabulary?  I propose that religious institutions do:
they hold repertoires of duty, sacrifice, and protection that can be attached
to the nation---not by making religion and nation identical, but by lending
symbolic infrastructure.  I hypothesize that this borrowing is easier where
state and religion are institutionally aligned
\citep{barromccleary2005, rubin2017}, so that religious and national idioms
already point to the same community, and where the religious field is less
fragmented, so that the vocabulary is more widely shared \citep{casanova1994}
(see the comparative statics of Propositions~2--3).
Ethnic diversity, by contrast, marks ancestry and language lines that need not
disrupt this shared religious vocabulary \citep{bellah1967}. 

The account therefore leaves three testable predictions, in increasing order
of specificity.  Selectivity: national and religious outcomes should display
stronger seismic-risk associations than family and general out-group
outcomes.  Asymmetric amplification: state--religion alignment bears
only on the symbolic coupling, so it should strengthen the association for the
expressive components more than for jobs priority.  Asymmetric
attenuation: religious fractionalization weakens the shared vocabulary and
so should attenuate the response most strongly for the expressive
components---equivalently, the expressive response to seismic risk should be
at full strength where the religious field is unified---while ethnic
fractionalization should show no comparable pattern, a
benchmark against a generic-diversity reading.  A complementary descriptive
diagnostic---conditioning on religiosity, which the coupling between
expressive and religious effort in the model makes natural even though it has
no formal comparative-static counterpart---is examined alongside the main
tests.  The next subsection takes these up in order.

\subsection{Testing the Two-Channel Asymmetry}
\label{sec:mech_evidence}

The first prediction concerns selectivity: is the association specific to
national and religious identity, or does seismic risk strengthen any close tie
and harden any group boundary?  Table~\ref{tab:scalable_institutions} shows
that family importance and family trust are small and insignificant, as are
attitudes toward ethnic diversity, rejection of different-language and
different-race neighbors, and generalized distrust.  Confidence in churches is
also null, so the religious response in Table~\ref{tab:main_results} appears to
reflect individual piety rather than institutional trust.  Taken together, the
two tables show national outcomes that are negative and mostly exclude zero,
religiosity between them, and family and out-group outcomes centered near
zero.  This ordering weighs against a diffuse insecurity response that
strengthens close attachments and hardens group boundaries generally.\footnote{A
broader battery of out-group and
cosmopolitan outcomes is also null, although several auxiliary items have
substantially smaller samples (Table~\ref{tab:outgroup}).}

The second prediction is asymmetric amplification: symbolic integration should
matter more for expressive national attachment than for practical national
priority.  Table~\ref{tab:expressive_alignment} interacts seismic-risk distance
with a standardized country-year index of state--religion alignment.  Alignment
significantly strengthens the distance association for the national-core index
and very proud ($p=0.036$ and $p=0.030$), marginally strengthens it for
willingness to fight ($p=0.068$), and does not significantly moderate jobs
priority ($p=0.325$).  Moving from one standard deviation below to one above
mean alignment changes the standardized very-proud coefficient from $-0.001$
to $-0.236$, compared with a change from $-0.057$ to $-0.143$ for jobs
priority.\footnote{The moderator is the RAS official-support variable (SBX,
0--13) from \citet{arda_ras3}, standardized to mean zero and unit variance; the
2014 value is carried forward for later WVS waves (Section~\ref{sec:data}).}

Split-sample estimates provide a complementary representation of this
conditionality.  The very-proud coefficient is negative in high-alignment
country-years, survives country-level wild-cluster-bootstrap inference and
equal-country weighting, and is close to zero under low alignment.  A
three-point pride scale produces the same qualitative pattern.  I treat these
splits as corroboration of the continuous interaction rather than as separate
tests: they support the prediction that expressive attachment is strongest
where religious and national idioms are institutionally connected, but they
do not imply that expressive attachment responds only in those
settings.\footnote{Table~\ref{tab:pride_regime} reports a very-proud coefficient of $-0.179$ above mean alignment ($p_{\mathrm{wcb}}=0.046$), against essentially zero below it, and robust to equal-country weighting. Table~\ref{tab:alignment_horserace} shows that the alignment interaction survives a competing interaction with country mean religiosity, although allowing separate distance slopes across nine macro-regions reduces its precision for very proud.}

The third prediction concerns fragmentation of the symbolic environment.
Table~\ref{tab:religious_fractionalization} interacts distance with religious
(Panel~A) and ethnic (Panel~B) fractionalization \citep{alesina2003}.
Religious fractionalization significantly attenuates the national-core index
($p=0.003$), very proud ($p=0.005$), and willingness to fight ($p=0.038$).
For very proud, the standardized coefficient moves from $-0.210$ at one
standard deviation below mean fractionalization to $-0.023$ at one standard
deviation above it.  The jobs-priority interaction is smaller and imprecise
($p=0.143$), while ethnic fractionalization shows no comparable pattern.
Thus, the relevant friction appears to be fragmentation of the religious
vocabulary that can be attached to the nation, rather than diversity in
general.\footnote{The result is unchanged when ethnic fractionalization is
measured with the time-varying Historical Index of Ethnic Fractionalization
\citep{drazanova2020}; none of its four interactions is significant (Online
Appendix Table~\ref{tab:ethnic_fractionalization_heterogeneity}).  Because the
two \citet{alesina2003} indices correlate at $r=0.80$, I also enter their
interactions jointly.  The religious interactions remain significant and the
ethnic interactions do not (Table~\ref{tab:religious_fractionalization},
Panel~C).}

The complementary descriptive diagnostic asks whether conditioning on religiosity
changes the seismic-risk coefficients.  Adding individual religiosity
attenuates the coefficient by 16.4\% for very proud, 8.9\% for willingness to
fight, and 8.5\% for the national-core index, compared with 2.9\% for jobs
priority (Table~\ref{tab:religious_coping_attenuation}).  This ordering is consistent
with symbolic coupling between religious and expressive effort, but it is not
a causal decomposition: religiosity itself responds to seismic risk and is
therefore a bad control for mediation.  Its limited implication is that the
jobs-priority association shares little of the variation captured by the
religiosity index.

The exposure logic of Section~\ref{sec:mech_explanation} leaves a dynamic
signature as well. As Section~\ref{sec:results_event} showed, a realized
earthquake moves the between-wave response among the immobile---older,
place-attached residents, not the mobile young
(Table~\ref{tab:place_attachment})---and does so on the expressive and core
margins rather than the religiosity benchmark. This places the dynamic evidence
where the account predicts: on those bound to the risk, and on the national
rather than the religious-coping margin.

The evidence therefore has a coherent hierarchy: national and religious
outcomes respond while the available family and boundary diagnostics do not;
alignment especially amplifies expressive attachment; and religious, but not
ethnic, fragmentation attenuates that response.  Multiple-testing adjustment
leaves the strongest evidence on the expressive components and the
national-core index rather than on jobs priority. These patterns are consistent with the proposed mechanism but
do not exclude every alternative; the next subsection considers three of them.

\subsection{Alternative Explanations}\label{sec:alternatives}

Three alternatives could generate similar patterns, and each carries a prediction
the proposed mechanism does not share.  The first---a diffuse insecurity or
ethnic-threat response that raises all in-group preferences at once---is already
answered by the specificity evidence above: the association is concentrated in the
national domain and is attenuated by religious, not ethnic, fragmentation,
neither of which a uniform threat response would predict.

A second, demand-side reading holds that earthquake risk raises demand for
state-provided goods, with national membership serving as the eligibility
criterion; if so, the response should be stronger where the state is more
capable.  Table~\ref{tab:state_capacity_heterogeneity} interacts seismic-risk
distance with a standardized state-capacity composite (log GDP per capita, V-Dem
rule of law, public-sector cleanliness, and Polity democracy).\footnote{The
composite averages standardized log GDP per capita \citep{worldbank2026wdi},
V-Dem rule of law and public-sector cleanliness \citep{vdem2026}, and Polity
democracy \citep{marshallpolity2018}; all components are predetermined relative
to WVS responses.}  The composite does not significantly moderate any outcome.\footnote{Harder measures of state authority and fiscal capacity amplify the jobs-priority coefficient more clearly than the core index, the expressive components, or religiosity (Table~\ref{tab:country_heterogeneity_blocks})---a scope condition for the \emph{distributive} interpretation (a credible state raises the value of membership as an allocation claim) rather than a force on the expressive margin, which these measures do not amplify. Among eight institutional-confidence items, only confidence in the armed forces has a significant distance coefficient, an exploratory pattern consistent with collective security rather than broad confidence in administrative delivery; it---like the headline gradients for every baseline outcome---is essentially unchanged after controlling for distance to the nearest international land border, so the results do not reflect proximity to a militarized border (Tables~\ref{tab:institutional_confidence} and~\ref{tab:armed_forces_border}).}

The third alternative attributes the effect to top-down authoritarian
mobilization, which should make it larger in low-democracy regimes where leaders
exploit crises through nationalist appeals.  A standardized
democracy--accountability composite does not significantly moderate the association,
and the estimated conditional effects remain similar across low- and
high-democracy environments (Table~\ref{tab:country_heterogeneity_selected}).\footnote{The index averages standardized V-Dem electoral and liberal democracy and judicial and legislative constraints \citep{vdem2026} with Polity democracy \citep{marshallpolity2018}; V-Dem electoral democracy is also reported separately. Two features reinforce this: country-year fixed effects absorb all country-level institutional variation, including regime type, so any residual concern would require mobilization targeted specifically at seismically exposed regions; and the expressive results track state--religion alignment rather than democratic quality, pointing to symbolic integration rather than political repression.}  

Residual concerns remain---local labor-market scarcity near active zones could
independently raise jobs priority, and selective sorting of nationally oriented
individuals into exposed regions cannot be fully excluded---but the
heterogeneity evidence helps delimit these alternatives.  Accountability,
civil-society strength, subnational political scale, unemployment, and several
other moderators do not systematically interact with exposure.  The
individual results are not uniformly null, however: low education,
agricultural work, and especially low confidence in state institutions are
associated with weaker jobs-priority effects.  These exploratory interactions
may reflect post-treatment characteristics and are therefore not causal tests.
The most systematic country-level moderators remain religious
fractionalization, state--religion alignment, and selected hard state-capacity
measures (Tables~\ref{tab:country_heterogeneity_selected},
\ref{tab:country_heterogeneity_blocks},
and~\ref{tab:micro_heterogeneity}).\footnote{A related alternative is the ``welfare chauvinism'' hypothesis of Section~\ref{sec:data}, which cannot be tested directly because region-level immigrant exposure is unavailable in the WVS. Own immigrant status, observed only from 2010, illustrates the limit: in that later-wave sample the jobs-priority coefficient is already near zero, and restricting to native-born respondents or controlling for immigrant status changes little (Table~\ref{tab:jobs_native})---so own status does not explain within-subsample differences, but this cannot establish full-sample robustness or separate allocation under scarcity from anti-immigrant sentiment. The contrast with out-group diagnostics is nonetheless informative: the seismic gradient appears for jobs priority (Table~\ref{tab:main_results}) but not for immigration-policy restrictiveness ($\hat\beta_{\mathrm{std}}=-0.027$, n.s.) or rejection of immigrant neighbors ($-0.005$, n.s.; Table~\ref{tab:outgroup}, columns~(7) and~(4)), consistent with allocation under scarcity rather than diffuse anti-immigrant sentiment. The item alone cannot establish this, but controlling directly for both attitudes leaves the jobs gradient essentially unchanged ($-0.102\to-0.099$ on that sample; Table~\ref{tab:jobs_distributive_validity}), indicating that its seismic component is largely orthogonal to general anti-immigrant sentiment.}  The alternatives remain possible,
but each requires a separate assumption to explain why the response is national
rather than generalized, why it is moderated by religious rather than ethnic
fragmentation, and why the expressive margin tracks state--religion alignment.
The proposed mechanism organizes all three with a single distinction: the nation
can work immediately as a practical membership rule, while its transformation
into an affective moral community depends on symbolic infrastructure.

\section{Conclusion}\label{sec:conclusion}

The evidence in this paper supports a specific account of how collective
environmental threat connects to national identity.  Earthquake risk is
associated with stronger national in-group orientation, but not through a
uniform process.  The distributive margin---treating national membership as a
practical allocation rule for scarce resources---appears in the baseline and
across many specification checks, although its precision weakens under equal
country weighting and country-level inference.  The
expressive margin---turning material destruction into duty, pride, and
collective sacrifice---is conditionally activated, requiring symbolic
infrastructure that links religious and national institutions.  These
margins are related and conceptually distinct, though formal tests do not sharply
separate them---their average magnitudes are statistically indistinguishable, and
the distinction rests mainly on their differential institutional moderation.  That
distinction is the paper's central claim.

The baseline pattern documents this separation.  Proximity to seismic risk
predicts the national-core index and all three components---very proud,
willingness to fight, and jobs priority---while family attachment, out-group
diagnostics, and generalized trust are near zero (family nulls partly reflect
ceiling effects in the underlying items) and religiosity rises in parallel.
The mechanism evidence locates the conditionality: state--religion alignment
amplifies the association for expressive outcomes more clearly than for jobs
priority, and religious fractionalization attenuates the response more
systematically than ethnic fractionalization.

The alternative explanations remain possible but fit the full pattern less
directly.  A diffuse-insecurity account predicts stronger effects on ethnic-threat
and out-group diagnostics; these are small and insignificant.  A state-delivery
account predicts larger responses where the state is more capable; aggregate
state-capacity interactions are null, though harder fiscal measures amplify
jobs priority, consistent with the distributive channel.  An
authoritarian-mobilization account predicts stronger effects under low democracy;
democracy interactions are insignificant. 

The design identifies a long-run within-country-year association between
seismic-risk geography and attitudes.  The diagnostics that carry the identification burden
are demanding: the exposure measure predicts realized shaking, the relationship
is specific to seismic hazard among geographically comparable hazards, and it
survives the geography battery, ethnic-composition controls, and
leave-one-out exercises.  The stacked realized-event design
is complementary: it finds no robust shift in the average respondent---for
religiosity as for nationalism---but the response concentrates among the
immobile, older residents who cannot leave the hazard a realized earthquake
makes salient. The event thus operates as a revelation of permanent risk rather
than a transient shock, reconciling the two designs.

The broader implication is that collective environmental threat can be
associated with nationalism in more than one way: it can make the nation
salient as an allocation boundary, and it can make the nation a moral
community---an expressive response that depends on symbolic infrastructure.
Where the nation-building literature explains national identity as something
states supply, the evidence here points to a demand-side, geographically
rooted origin that operates from below---and that the symbolic infrastructure
of nation-building can amplify.
The connection between religion and nationalism documented
here is not one of substitution: religion conditions the expressive content of
national attachment---turning material destruction into a shared moral
experience---while the distributive margin operates largely independently of
the religious-symbolic field.

%
%

\newpage
\bibliographystyle{apalike}
\bibliography{references}


\clearpage
\renewcommand{\thefigure}{\Roman{figure}}
\setcounter{figure}{0}
\section*{Main Figures}


\begin{figure}[H]
\centering
\captionsetup{font={normalsize,bf}}
\caption{WVS Region Coverage and Seismic Geography}\label{fig:map}
\vspace{0.3cm}
\includegraphics[width=0.95\textwidth]{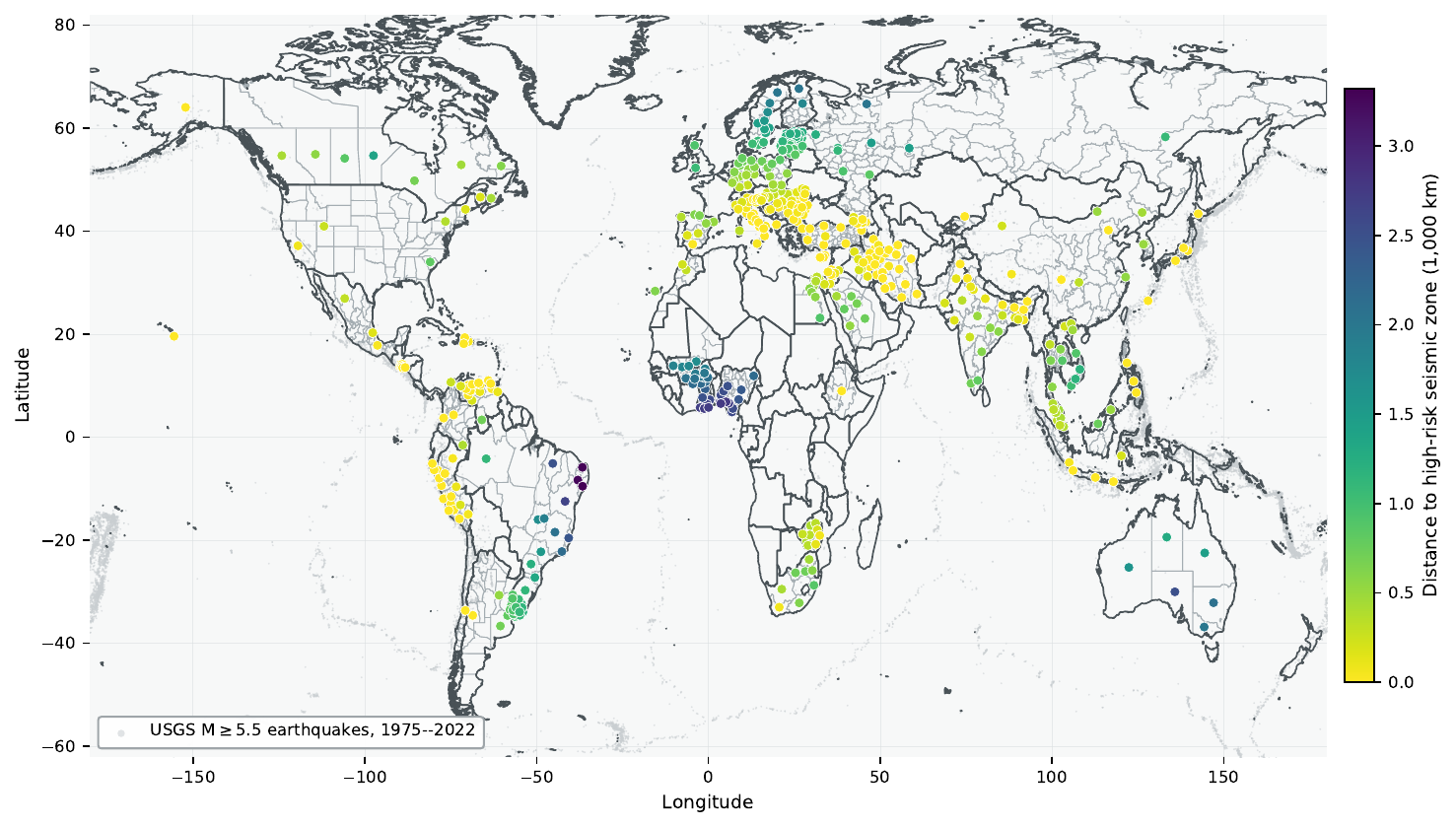}
\vspace{0.1cm}
\begin{minipage}{0.92\textwidth}
\scriptsize\textbf{Notes:} Each marker is one of the 497 matched WVS/Bentzen subnational-region centroids, colored by the treatment variable: geodesic distance from the region's administrative boundary to the nearest high-risk seismic zone (GSHAP earthquake-intensity zones~3--4), in 1,000~km; darker colors indicate regions farther from seismic risk. Thick lines mark national borders and thin lines mark first-order administrative divisions in sample countries. The light-grey background points are the epicenters of all 22,797 USGS magnitude-$\geq$5.5 earthquakes, 1975--2022, which trace the world's plate boundaries and provide the seismic-geography reference. Identification compares regions within the same country and survey year.
\end{minipage}
\end{figure}


\clearpage
\renewcommand{\thetable}{\Roman{table}}
\setcounter{table}{0}
\section*{Main Tables}

\begin{table}[H]
{
\renewcommand{\arraystretch}{0.78}
\setlength{\tabcolsep}{1.5pt}
\captionsetup{font={normalsize,bf}}
\caption{Sample Coverage and Variable Availability}\label{tab:coverage}
\vspace{-0.5cm}
\begin{center}
\begin{adjustbox}{max width=0.98\linewidth, max totalheight=0.86\textheight, keepaspectratio}%

\end{adjustbox}
\end{center}
}
\end{table}


\begin{table}[H]
{
\renewcommand{\arraystretch}{0.85}
\setlength{\tabcolsep}{2.5pt}
\captionsetup{font={normalsize,bf}}
\caption{Main National In-Group Orientation Results}\label{tab:main_results}
\vspace{-0.45cm}
\begin{center}
\small
\begin{adjustbox}{max width=0.98\linewidth, max totalheight=0.95\textheight, keepaspectratio}%
%
\end{adjustbox}
\end{center}
}
\end{table}


\begin{table}[H]
\centering
{
\renewcommand{\arraystretch}{0.8}
\setlength{\tabcolsep}{3.2pt}
\captionsetup{font={normalsize,bf}}
\caption{Realized Earthquakes Between WVS Fieldwork Periods}\label{tab:event_study_usgs}
\vspace{-0.5cm}
\begin{center}
\small
\begin{adjustbox}{max width=0.98\linewidth, max totalheight=0.85\textheight, keepaspectratio}%
%
\end{adjustbox}
\end{center}
}
\end{table}



\begin{table}[H]
{
\renewcommand{\arraystretch}{0.85}
\setlength{\tabcolsep}{3.5pt}
\captionsetup{font={normalsize,bf}}
\caption{Alternative Group Attachments and Out-Group
Diagnostics}\label{tab:scalable_institutions}
\vspace{-0.45cm}
\begin{center}
\small
\begin{adjustbox}{max width=1\linewidth, max totalheight=0.84\textheight, keepaspectratio}%
%
\end{adjustbox}
\end{center}
}
\end{table}


\begin{table}[H]
{
\renewcommand{\arraystretch}{0.85}
\setlength{\tabcolsep}{4pt}
\captionsetup{font={normalsize,bf}}
\caption{State-Religion Alignment and Expressive National
Responses}\label{tab:expressive_alignment}
\vspace{-0.45cm}
\begin{center}
\small
\begin{adjustbox}{max width=0.98\linewidth, max totalheight=0.82\textheight, keepaspectratio}%
%
\end{adjustbox}
\end{center}
}
\end{table}


\begin{table}[H]
{
\renewcommand{\arraystretch}{0.85}
\setlength{\tabcolsep}{4pt}
\captionsetup{font={normalsize,bf}}
\caption{Heterogeneity by Religious and Ethnic
Fractionalization}\label{tab:religious_fractionalization}
\vspace{-0.45cm}
\begin{center}
\small
\begin{adjustbox}{max width=0.98\linewidth, max totalheight=0.82\textheight, keepaspectratio}%
%
\end{adjustbox}
\end{center}
}
\end{table}


\begin{table}[h!]
\centering
{
\renewcommand{\arraystretch}{0.8}
\setlength{\tabcolsep}{3.2pt}
\captionsetup{font={normalsize,bf}}
\caption{Realized Earthquakes and Place Attachment}\label{tab:place_attachment}
\vspace{-0.5cm}
\begin{center}
\small
\begin{adjustbox}{max width=0.98\linewidth, max totalheight=0.85\textheight, keepaspectratio}%
%
\end{adjustbox}
\end{center}
}
\end{table}


\clearpage
\appendix
\section*{Online Appendix for the Paper ``Environmental Threat and the Nation: Earthquake Risk, Distributive Priority, and Expressive Attachment''}
\begin{center}
\normalsize Hector Galindo-Silva\\
\small Department of Economics, Pontificia Universidad Javeriana
\end{center}
\addcontentsline{toc}{section}{Online Appendix for the Paper ``Environmental Threat and the Nation''}

\noindent This online appendix contains the supplementary evidence and formal
model for the paper.  Section~\ref{app:tables_figures} reports tables and
figures in the order in which they are first cited in the main text.  The
material covers measurement and identification diagnostics, robustness of the
cross-sectional and realized-event designs, and additional mechanism and
heterogeneity exercises.  Section~\ref{app:theory} presents the model that
organizes the distinction between distributive national priority and expressive
national attachment and states the limits of its empirical interpretation.




\section{Additional Tables and Figures}
\label{app:tables_figures}
\renewcommand{\thetable}{A\arabic{table}}
\setcounter{table}{0}
\renewcommand{\thefigure}{A\arabic{figure}}
\setcounter{figure}{0}


\begin{table}[H]
{
\renewcommand{\arraystretch}{0.8}
\setlength{\tabcolsep}{2pt}
\captionsetup{font={normalsize,bf}}
\caption{Covariate Balance by Earthquake-Risk Exposure}\label{tab:covariate_balance}
\vspace{-0.5cm}
\begin{center}
\small
\begin{adjustbox}{max width=0.98\linewidth, max totalheight=0.85\textheight, keepaspectratio}%

\end{adjustbox}
\end{center}
}
\end{table}


\begin{table}[H]
{
\renewcommand{\arraystretch}{0.9}
\setlength{\tabcolsep}{3.0pt}
\captionsetup{font={normalsize,bf}}
\caption{Realized Earthquake Exposure Validation}\label{tab:treatment_validation}
\vspace{-0.45cm}
\begin{center}
\small
\begin{adjustbox}{max width=1\linewidth, max totalheight=0.85\textheight, keepaspectratio}%
%
\end{adjustbox}
\end{center}
}
\end{table}


\begin{table}[H]
{
\renewcommand{\arraystretch}{0.9}
\setlength{\tabcolsep}{3.2pt}
\captionsetup{font={normalsize,bf}}
\caption{Validation of the National-Core Index}\label{tab:core_validation}
\vspace{-0.5cm}
\begin{center}
\small
\begin{adjustbox}{max width=1\linewidth, max totalheight=0.85\textheight, keepaspectratio}%
%
\end{adjustbox}
\end{center}
}
\end{table}


\begin{table}[H]
{
\renewcommand{\arraystretch}{0.85}
\setlength{\tabcolsep}{4pt}
\captionsetup{font={normalsize,bf}}
\caption{Temporal Heterogeneity of the Seismic-Risk Gradient}\label{tab:temporal_heterogeneity}
\vspace{-0.45cm}
\begin{center}
\small
\begin{adjustbox}{max width=0.98\linewidth, max totalheight=0.82\textheight, keepaspectratio}%
%
\end{adjustbox}
\end{center}
}
\end{table}


\begin{table}[H]
{
\renewcommand{\arraystretch}{0.85}
\setlength{\tabcolsep}{3.2pt}
\captionsetup{font={normalsize,bf}}
\caption{Event-Time Dynamics of the Realized-Earthquake Response}\label{tab:event_study_dynamics}
\vspace{-0.45cm}
\begin{center}
\small
\begin{adjustbox}{max width=0.98\linewidth, max totalheight=0.82\textheight, keepaspectratio}%
%
\end{adjustbox}
\end{center}
}
\end{table}


\begin{table}[H]
{
\renewcommand{\arraystretch}{0.8}
\setlength{\tabcolsep}{3.2pt}
\captionsetup{font={normalsize,bf}}
\caption{Auxiliary National-Orientation Indices}\label{tab:alternative_indices}
\vspace{-0.5cm}
\begin{center}
\small
\begin{adjustbox}{max width=0.98\linewidth, max totalheight=0.85\textheight, keepaspectratio}%
%
\end{adjustbox}
\end{center}
}
\end{table}


\begin{table}[H]
{
\renewcommand{\arraystretch}{0.7}
\setlength{\tabcolsep}{0.2pt}
\captionsetup{font={normalsize,bf}}
\caption{Robustness Specifications}\label{tab:robustness_specs}
\vspace{-0.5cm}
\begin{center}
\begin{adjustbox}{max width=0.98\linewidth, max totalheight=0.9\textheight, keepaspectratio}%
%
\end{adjustbox}
\end{center}
}
\end{table}


\begin{table}[H]
{
\renewcommand{\arraystretch}{0.85}
\setlength{\tabcolsep}{1.5pt}
\captionsetup{font={normalsize,bf}}
\caption{Multiple Testing Corrections}\label{tab:mht}
\vspace{-0.45cm}
\begin{center}
\small
\begin{adjustbox}{max width=0.98\linewidth, max totalheight=0.9\textheight, keepaspectratio}%
%
\end{adjustbox}
\end{center}
}
\end{table}


\begin{table}[H]
{
\renewcommand{\arraystretch}{0.85}
\setlength{\tabcolsep}{4pt}
\captionsetup{font={normalsize,bf}}
\caption{Survey-Weighted vs.\ Unweighted Estimates}\label{tab:unweighted}
\vspace{-0.45cm}
\begin{center}
\small
\begin{adjustbox}{max width=0.80\linewidth, max totalheight=0.85\textheight, keepaspectratio}%
%
\end{adjustbox}
\end{center}
}
\end{table}


\begin{table}[H]
{
\renewcommand{\arraystretch}{0.8}
\setlength{\tabcolsep}{3.2pt}
\captionsetup{font={normalsize,bf}}
\caption{Bentzen Specification Checks}\label{tab:bentzen_specs}
\vspace{-0.5cm}
\begin{center}
\small
\begin{adjustbox}{max width=0.98\linewidth, max totalheight=0.85\textheight, keepaspectratio}%
%
\end{adjustbox}
\end{center}
}
\end{table}


\begin{table}[H]
{
\renewcommand{\arraystretch}{0.8}
\setlength{\tabcolsep}{3.2pt}
\captionsetup{font={normalsize,bf}}
\caption{Hazard Specificity: Alternative Seismic Risk Measures}\label{tab:robustness_hazards}
\vspace{-0.5cm}
\begin{center}
\small
\begin{adjustbox}{max width=0.98\linewidth, max totalheight=0.85\textheight, keepaspectratio}%
%
\end{adjustbox}
\end{center}
}
\end{table}


\begin{table}[H]
{
\renewcommand{\arraystretch}{0.8}
\setlength{\tabcolsep}{3.2pt}
\captionsetup{font={normalsize,bf}}
\caption{Alternative Natural-Hazard Proximity Checks}\label{tab:alternative_disasters}
\vspace{-0.5cm}
\begin{center}
\small
\begin{adjustbox}{max width=0.98\linewidth, max totalheight=0.85\textheight, keepaspectratio}%
%
\end{adjustbox}
\end{center}
}
\end{table}


\begin{table}[H]
{
\renewcommand{\arraystretch}{0.85}
\setlength{\tabcolsep}{4pt}
\captionsetup{font={normalsize,bf}}
\caption{Spatial-Sorting Robustness}\label{tab:spatial_sorting}
\vspace{-0.45cm}
\begin{center}
\small
\begin{adjustbox}{max width=0.9\linewidth, max totalheight=0.82\textheight, keepaspectratio}%
%
\end{adjustbox}
\end{center}
}
\end{table}


\begin{figure}[H]
\centering
\captionsetup{font={normalsize,bf}}
\caption{Leave-One-Country-Out Robustness}\label{fig:loo}
\vspace{0.3cm}
\includegraphics[width=0.90\textwidth,height=0.78\textheight,keepaspectratio]{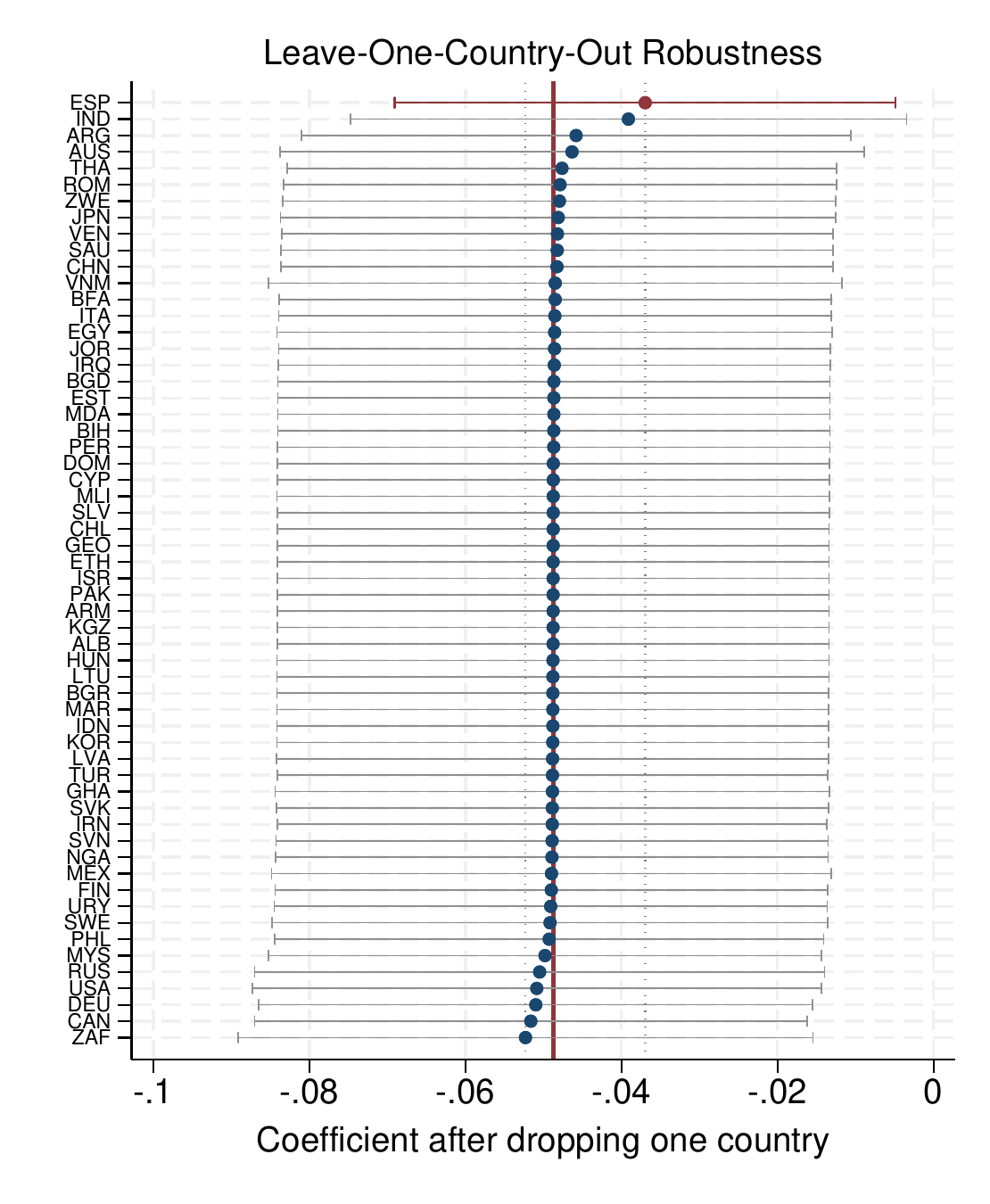}
\vspace{0.1cm}
\begin{minipage}{0.90\textwidth}
\scriptsize\textbf{Notes:} Each row reports the baseline coefficient on distance to high-risk seismic zones for the national-core index, estimated after dropping all observations from one country. The vertical line marks the full-sample estimate; dotted vertical lines mark the range of leave-one-out estimates. The country highlighted in red produces the largest absolute shift; all 63 estimates remain negative. Standard errors are clustered by subnational region.
\end{minipage}
\end{figure}


\begin{table}[H]
{
\renewcommand{\arraystretch}{0.9}
\setlength{\tabcolsep}{3.2pt}
\captionsetup{font={normalsize,bf}}
\caption{Aggregation, Weighting, and Identifying-Sample Diagnostics}\label{tab:aggregation_weighting}
\vspace{-0.45cm}
\begin{center}
\small
\begin{adjustbox}{max width=0.95\linewidth, max totalheight=0.85\textheight, keepaspectratio}%

\end{adjustbox}
\end{center}
}
\end{table}


\begin{table}[H]
{
\renewcommand{\arraystretch}{0.85}
\setlength{\tabcolsep}{3.2pt}
\captionsetup{font={normalsize,bf}}
\caption{Country-Level Clustering and Wild Cluster Bootstrap}\label{tab:spatial_inference}
\vspace{-0.45cm}
\begin{center}
\small
\begin{adjustbox}{max width=0.95\linewidth, max totalheight=0.85\textheight, keepaspectratio}%
%
\end{adjustbox}
\end{center}
}
\end{table}


\begin{table}[H]
{
\renewcommand{\arraystretch}{0.9}
\setlength{\tabcolsep}{4pt}
\captionsetup{font={normalsize,bf}}
\caption{Spatial-Noise Placebo for the Baseline Association}\label{tab:kelly_noise}
\vspace{-0.45cm}
\begin{center}
\small
\begin{adjustbox}{max width=0.95\linewidth, max totalheight=0.85\textheight, keepaspectratio}%
%
\end{adjustbox}
\end{center}
}
\end{table}


\begin{table}[H]
{
\renewcommand{\arraystretch}{0.9}
\setlength{\tabcolsep}{4pt}
\captionsetup{font={normalsize,bf}}
\caption{The Expressive Association by Symbolic Environment: Sample Splits}\label{tab:pride_regime}
\vspace{-0.45cm}
\begin{center}
\small
\begin{adjustbox}{max width=0.98\linewidth, max totalheight=0.85\textheight, keepaspectratio}%
%
\end{adjustbox}
\end{center}
}
\end{table}


\begin{table}[H]
{
\renewcommand{\arraystretch}{0.8}
\setlength{\tabcolsep}{3.2pt}
\captionsetup{font={normalsize,bf}}
\caption{Realized-Event Estimator and Sample Robustness}\label{tab:event_study}
\vspace{-0.5cm}
\begin{center}
\small
\begin{adjustbox}{max width=0.98\linewidth, max totalheight=0.85\textheight, keepaspectratio}%
%
\end{adjustbox}
\end{center}
}
\end{table}


\begin{table}[H]
{
\renewcommand{\arraystretch}{0.85}
\setlength{\tabcolsep}{4pt}
\captionsetup{font={normalsize,bf}}
\caption{Event-Study Sensitivity to the Pride Threshold}\label{tab:event_study_pride_threshold}
\vspace{-0.45cm}
\begin{center}
\small
\begin{adjustbox}{max width=0.85\linewidth, max totalheight=0.82\textheight, keepaspectratio}%
%
\end{adjustbox}
\end{center}
}
\end{table}


\begin{table}[H]
{
\renewcommand{\arraystretch}{0.85}
\setlength{\tabcolsep}{4pt}
\captionsetup{font={normalsize,bf}}
\caption{Equal-Length Future-Earthquake Placebo}\label{tab:event_study_placebos}
\vspace{-0.45cm}
\begin{center}
\small
\begin{adjustbox}{max width=0.98\linewidth, max totalheight=0.82\textheight, keepaspectratio}%
%
\end{adjustbox}
\end{center}
}
\end{table}


\begin{table}[H]
{
\renewcommand{\arraystretch}{0.8}
\setlength{\tabcolsep}{3.2pt}
\captionsetup{font={normalsize,bf}}
\caption{Out-Group and Specificity Outcomes}\label{tab:outgroup}
\vspace{-0.5cm}
\begin{center}
\small
\begin{adjustbox}{max width=0.98\linewidth, max totalheight=0.85\textheight, keepaspectratio}%
%
\end{adjustbox}
\end{center}
}
\end{table}

\begin{table}[H]
{
\renewcommand{\arraystretch}{0.9}
\setlength{\tabcolsep}{4pt}
\captionsetup{font={normalsize,bf}}
\caption{Alignment Moderation: Mean-Religiosity and Macro-Region Horse-Races}\label{tab:alignment_horserace}
\vspace{-0.45cm}
\begin{center}
\small
\begin{adjustbox}{max width=0.9\linewidth, max totalheight=0.85\textheight, keepaspectratio}%
%
\end{adjustbox}
\end{center}
}
\end{table}


\begin{table}[H]
{
\renewcommand{\arraystretch}{0.85}
\setlength{\tabcolsep}{4pt}
\captionsetup{font={normalsize,bf}}
\caption{Heterogeneity by Ethnic Fractionalization}\label{tab:ethnic_fractionalization_heterogeneity}
\vspace{-0.45cm}
\begin{center}
\small
\begin{adjustbox}{max width=0.98\linewidth, max totalheight=0.82\textheight, keepaspectratio}%
%
\end{adjustbox}
\end{center}
}
\end{table}


\begin{table}[H]
{
\renewcommand{\arraystretch}{0.85}
\setlength{\tabcolsep}{4pt}
\captionsetup{font={normalsize,bf}}
\caption{Religiosity Attenuation by Outcome}\label{tab:religious_coping_attenuation}
\vspace{-0.45cm}
\begin{center}
\small
\begin{adjustbox}{max width=0.98\linewidth, max totalheight=0.82\textheight, keepaspectratio}%
%
\end{adjustbox}
\end{center}
}
\end{table}


\begin{table}[H]
{
\renewcommand{\arraystretch}{0.85}
\setlength{\tabcolsep}{4pt}
\captionsetup{font={normalsize,bf}}
\caption{State Capacity as a Scope Condition}\label{tab:state_capacity_heterogeneity}
\vspace{-0.45cm}
\begin{center}
\small
\begin{adjustbox}{max width=0.98\linewidth, max totalheight=0.82\textheight, keepaspectratio}%
%
\end{adjustbox}
\end{center}
}
\end{table}


\begin{table}[H]
{
\renewcommand{\arraystretch}{0.78}
\setlength{\tabcolsep}{1.7pt}
\captionsetup{font={normalsize,bf}}
\caption{Country-Level Heterogeneity by Mechanism
Block}\label{tab:country_heterogeneity_blocks}
\vspace{-0.45cm}
\begin{center}
\small
\begin{adjustbox}{max width=0.98\linewidth, max totalheight=0.84\textheight, keepaspectratio}%
%
\end{adjustbox}
\end{center}
}
\end{table}


\begin{table}[H]
{
\renewcommand{\arraystretch}{0.85}
\setlength{\tabcolsep}{4pt}
\captionsetup{font={normalsize,bf}}
\caption{Institutional Confidence Outcomes}\label{tab:institutional_confidence}
\vspace{-0.45cm}
\begin{center}
\small
\begin{adjustbox}{max width=0.98\linewidth, max totalheight=0.85\textheight, keepaspectratio}%
%
\end{adjustbox}
\end{center}
}
\end{table}


\begin{table}[H]
{
\renewcommand{\arraystretch}{0.85}
\setlength{\tabcolsep}{4pt}
\captionsetup{font={normalsize,bf}}
\caption{Seismic-Risk Gradient Controlling for Distance to the International Border}\label{tab:armed_forces_border}
\vspace{-0.45cm}
\begin{center}
\small
\begin{adjustbox}{max width=0.98\linewidth, max totalheight=0.82\textheight, keepaspectratio}%
%
\end{adjustbox}
\end{center}
}
\end{table}


\begin{table}[H]
{
\renewcommand{\arraystretch}{0.88}
\setlength{\tabcolsep}{5pt}
\captionsetup{font={normalsize,bf}}
\caption{Authoritarian-Mobilization Check: Democracy and Accountability}\label{tab:country_heterogeneity_selected}
\vspace{-0.45cm}
\begin{center}
\small
\begin{adjustbox}{max width=0.98\linewidth, max totalheight=0.82\textheight, keepaspectratio}%
%
\end{adjustbox}
\end{center}
}
\end{table}


\begin{table}[H]
{
\renewcommand{\arraystretch}{0.85}
\setlength{\tabcolsep}{4pt}
\captionsetup{font={normalsize,bf}}
\caption{Heterogeneity by Individual
Vulnerability}\label{tab:micro_heterogeneity}
\vspace{-0.45cm}
\begin{center}
\small
\begin{adjustbox}{max width=0.98\linewidth, max totalheight=0.82\textheight, keepaspectratio}%
%
\end{adjustbox}
\end{center}
}
\end{table}


\begin{table}[H]
{
\renewcommand{\arraystretch}{0.85}
\setlength{\tabcolsep}{3pt}
\captionsetup{font={normalsize,bf}}
\caption{National Orientation and Respondents' Immigrant Status}\label{tab:jobs_native}
\vspace{-0.45cm}
\begin{center}
\small
\begin{adjustbox}{max width=0.98\linewidth, max totalheight=0.82\textheight, keepaspectratio}%
%
\end{adjustbox}
\end{center}
}
\end{table}

\begin{table}[H]
{
\renewcommand{\arraystretch}{0.85}
\setlength{\tabcolsep}{3pt}
\captionsetup{font={normalsize,bf}}
\caption{Distributive Priority versus General Anti-Immigrant Sentiment}\label{tab:jobs_distributive_validity}
\vspace{-0.45cm}
\begin{center}
\small
\begin{adjustbox}{max width=0.98\linewidth, max totalheight=0.82\textheight, keepaspectratio}%
%
\end{adjustbox}
\end{center}
}
\end{table}


\newpage


\section{A Model of National Priority and Symbolic Complementarity}
\label{app:theory}

This appendix presents a compact formal model of social interaction across
family, nation, and religion under chronic covariate environmental risk.  The model
is not a substitute for identification and its parameters are not structurally
estimated.  Its role is to discipline the mechanism in
Section~\ref{sec:mechanisms}: facing chronic covariate risk, individuals allocate
costly effort across several social margins, which are produced by different
inputs.  Family insurance requires unaffected relatives.  Practical national
priority requires a membership boundary.  Religious coping provides meaning
inside religious communities.  Expressive national attachment requires a shared
moral interpretation at national scale.

The empirical design in the main text estimates long-run associations with
respect to seismic-risk geography.  Accordingly, the model treats the relevant
state variable as the sustained salience of a covariate-risk environment rather
than the occurrence of one particular earthquake.  Regions closer to structural
seismic risk are interpreted as places where the relative usefulness of local,
national, and religious responses is more persistently salient.
The main insight is therefore not simply that disasters increase ``identity.''
It is that the two national responses use different technologies.  The
distributive margin is a \emph{boundary technology}: it converts scarcity into a
claim over jobs, aid, or public resources by distinguishing members from
non-members.  The expressive margin is a \emph{meaning technology}: it converts
material destruction into pride, duty, sacrifice, and willingness to defend the
nation.  The latter is harder to produce because a disaster must be narrated as
a national trial, obligation, or call to reconstruction.  Religion matters in
the model because religious repertoires can lower this symbolic cost when they
can be attached to the nation.

\subsection{Environment}

Individuals choose costly effort in four margins: family or local insurance
$e_{iF}\ge0$, distributive national priority $e_{iD}\ge0$, expressive national
attachment $e_{iE}\ge0$, and religious coping $e_{iR}\ge0$.  Let $S>0$ denote
the salience or expected intensity of the chronic seismic-risk environment, and
let $\rho\in[0,1]$ denote the covariance of losses were that risk to materialize.
Thus $S$ is not an event-time shock indicator; it summarizes the standing
importance of earthquake risk generated by structural hazard, accumulated
experience, institutions, and socialization.  Costs are quadratic:
$C(e)=\tfrac{1}{2}e^2$.  The chosen effort levels are equilibrium allocations
under that prevailing risk environment, not transient responses to one
realized earthquake.

\paragraph{Family or local insurance.}
The family margin represents local mutual insurance.  Its key input is an
unaffected network: relatives and neighbors must retain capacity to transfer
resources or provide support.  Let the marginal return to this effort be
$M_F(\rho)\ge0$, with $M_F'(\rho)<0$ and $M_F(1)=0$.  The utility contribution
of local investment is $M_F(\rho)e_{iF}-\tfrac{1}{2}e_{iF}^2$, so
$e_{iF}^*=M_F(\rho)$.  At $\rho=1$, $e_{iF}^*=0$: the model's family null is
not a claim that families become irrelevant, but a claim that local insurance is
least useful where risk is covariate enough that a disaster would damage the very
households that would normally insure one another.

\paragraph{Religious denominations and social networks.}
Social interaction matters because the return to effort on the scalable
margins depends on whether others invest in the same margin.  Interaction
takes place on a fixed baseline graph $\mathbf{G}_0$: a symmetric,
non-negative $N\times N$ matrix with zero diagonal, where $g_{0,ij}>0$ means
$i$ and $j$ interact.  Each individual belongs to one of $R$ religious
denominations, assigned independently across individuals with population
shares $(s_1,\ldots,s_R)$, so that any pair belongs to different
denominations with probability $\phi=1-\sum_r s_r^2$---the Herfindahl
fractionalization index.  Denominational boundaries discount the usefulness
of a link: links across denominations carry coordination friction
$c\in(0,1)$ on the national margins, and religious coordination operates
only within denominations.

\begin{assumption}[Expected Interaction Networks]\label{ass:networks}
Effort returns depend on the \emph{expected} networks induced by the random
denomination assignment on $\mathbf{G}_0$.  Since a link is
inter-denominational with probability $\phi$, the expected link weight on the
national margins is $(1-\phi)\cdot 1+\phi\,(1-c)=1-c\phi$, and on the
religious margin it is $1-\phi$.  Hence
\[
\mathbf{G}^{D}(\phi)=\mathbf{G}^{E}(\phi)=(1-c\phi)\,\mathbf{G}_0,
\qquad
\mathbf{G}^{R}(\phi)=(1-\phi)\,\mathbf{G}_0,
\]
which are smooth in $\phi$ with
$\partial\mathbf{G}^{D}/\partial\phi=\partial\mathbf{G}^{E}/\partial\phi
=-c\,\mathbf{G}_0\le0$ and
$\partial\mathbf{G}^{R}/\partial\phi=-\mathbf{G}_0\le0$ elementwise, strictly
wherever $g_{0,ij}>0$.
\end{assumption}

Working with the expected rather than the realized networks is a standard
mean-field device: it replaces the random graph by its mean, makes the
comparative statics in $\phi$ well defined, and is exact for the
expected-network economy studied here.  The common-friction structure
deliberately gives religious fragmentation a
possible effect on both national margins.  This is conservative for the paper's
main mechanism.  The distinctive prediction is not that the distributive margin
is entirely insulated from fragmentation, but that only the expressive margin
has an additional symbolic-complementarity channel with religion, through the
cross-term defined below.

\paragraph{Utility: boundary technology and meaning technology.}
Under $\rho=1$ the relevant utility is:
\begin{align}
  U_i &= S\kappa_D e_{iD}+\beta_D\sum_{j\ne i}g_{ij}^D e_{iD}e_{jD}
      +S\kappa_E e_{iE}+\beta_E\sum_{j\ne i}g_{ij}^E e_{iE}e_{jE}
      +S\Omega e_{iR}+\beta_R\sum_{j\ne i}g_{ij}^R e_{iR}e_{jR}
      \nonumber\\
      &\quad
      -\frac{1}{2}e_{iD}^2-\frac{1}{2}e_{iE}^2-\frac{1}{2}e_{iR}^2
      +\gamma_i(a,\phi)e_{iE}e_{iR},
      \label{eq:utility_app}
\end{align}
where $\kappa_D>0$ is the return to the national membership boundary under
scarcity, $\kappa_E>0$ is the secular return to expressive national attachment,
$\Omega>0$ is the intrinsic return to religious coping, and
$\beta_D,\beta_E,\beta_R\ge0$ are within-margin strategic complementarities.

The cross-term $\gamma_i(a,\phi)e_{iE}e_{iR}$ is the symbolic-cost channel.
It captures the idea that religious effort can raise the marginal return to
expressive national effort: rituals, ceremonies, languages of sacrifice, and
images of a wounded community can make material destruction easier to experience
as a national trial or call to reconstruction rather than only as material loss.
The parameter $a$ denotes state-religion alignment.  I assume
$\partial\gamma_i/\partial a\ge0$ and $\partial\gamma_i/\partial\phi\le0$.
Alignment raises $\gamma_i$ because religious and national symbols point more
readily toward the same collective ``we.''  Religious fractionalization lowers
$\gamma_i$ because meanings activated within separate religious groups aggregate
less easily into a common national narrative.  The distributive margin
$e_{iD}$ is structurally absent from this cross-term: priority for nationals can
be produced by a membership boundary and scarcity alone.

\subsection{Equilibrium Conditions}

The distributive first-order condition is independent of $e_{iR}$ and $e_{iE}$:
\begin{equation}
  S\kappa_D+\beta_D\sum_{j\ne i}g_{ij}^D e_{jD}-e_{iD}=0.
  \label{eq:foc_D}
\end{equation}
In matrix form,
\[
(\mathbf{I}-\beta_D\mathbf{G}^D(\phi))\mathbf{e}_D
=S\kappa_D\mathbf{1}.
\]
Thus the distributive margin is a network game within one block.  It may be
amplified by other people making similar distributive claims, but it is not
formally coupled to religiosity.

\begin{assumption}[Distributive Spectral Condition]\label{ass:spectralD}
$\beta_D<1/\lambda_{\max}(\mathbf{G}_0)$.  Because
$\mathbf{G}^D(\phi)=(1-c\phi)\mathbf{G}_0\le\mathbf{G}_0$ elementwise and the
spectral radius is monotone on non-negative matrices, the condition implies
$\beta_D\lambda_{\max}(\mathbf{G}^D(\phi))<1$ for every $\phi\in[0,1]$.
\end{assumption}

Under this condition, $(\mathbf{I}-\beta_D\mathbf{G}^D)^{-1}$ exists and is
non-negative (by the Neumann series, since $\mathbf{G}^D\ge0$), giving:
\begin{equation}
  \mathbf{e}_D^*=S\kappa_D(\mathbf{I}-\beta_D\mathbf{G}^D(\phi))^{-1}\mathbf{1}.
  \label{eq:eD_star}
\end{equation}

The expressive and religious margins are different.  Their first-order
conditions form a coupled block system:
\[
  \underbrace{\begin{pmatrix}
    \mathbf{I}-\beta_E\mathbf{G}^E(\phi) & -\boldsymbol{\Gamma}(a,\phi)\\
    -\boldsymbol{\Gamma}(a,\phi) & \mathbf{I}-\beta_R\mathbf{G}^R(\phi)
  \end{pmatrix}}_{\displaystyle\equiv\,\mathbf{M}_{ER}(a,\phi)}
  \begin{pmatrix}\mathbf{e}_E\\\mathbf{e}_R\end{pmatrix}
  =S\begin{pmatrix}\kappa_E\mathbf{1}\\\Omega\mathbf{1}\end{pmatrix},
\]
where
$\boldsymbol{\Gamma}(a,\phi)=\mathrm{diag}(\gamma_1(a,\phi),\ldots,\gamma_N(a,\phi))$.
The off-diagonal block is what makes expressive nationalism different from
jobs priority: part of the return to expressive national attachment comes from
the religious coping block.

\begin{assumption}[Expressive-Religious Spectral Condition]\label{ass:spectralER}
$\rho(\bar{\mathbf{B}}_{ER})<1$, where
$\bar{\mathbf{B}}_{ER}=\bigl(\begin{smallmatrix}\beta_E\mathbf{G}_0 &
\bar{\boldsymbol{\Gamma}}\\ \bar{\boldsymbol{\Gamma}} & \beta_R\mathbf{G}_0
\end{smallmatrix}\bigr)$ and
$\bar{\boldsymbol{\Gamma}}=\mathrm{diag}(\bar\gamma_1,\ldots,\bar\gamma_N)$
with $\bar\gamma_i=\sup_{a,\phi}\gamma_i(a,\phi)<\infty$.  Because
$\mathbf{B}_{ER}(a,\phi)=\bigl(\begin{smallmatrix}\beta_E\mathbf{G}^E(\phi) &
\boldsymbol{\Gamma}(a,\phi)\\ \boldsymbol{\Gamma}(a,\phi) & \beta_R\mathbf{G}^R(\phi)
\end{smallmatrix}\bigr)\le\bar{\mathbf{B}}_{ER}$ elementwise for every
$(a,\phi)$, the condition implies $\rho(\mathbf{B}_{ER}(a,\phi))<1$
uniformly.
\end{assumption}

Under this condition, $\mathbf{M}_{ER}(a,\phi)=\mathbf{I}_{2N}-\mathbf{B}_{ER}$
is a non-singular M-matrix with $\mathbf{M}_{ER}^{-1}\ge0$, so:
\begin{equation}
  \begin{pmatrix}\mathbf{e}_E^*\\\mathbf{e}_R^*\end{pmatrix}
  =S\,\mathbf{M}_{ER}(a,\phi)^{-1}
  \begin{pmatrix}\kappa_E\mathbf{1}\\\Omega\mathbf{1}\end{pmatrix}
  \gg\mathbf{0}.
  \label{eq:eE_eR_star}
\end{equation}

\subsection{Comparative Statics}

\begin{proposition}[Chronic-Risk Salience and Scalable Responses]
Under Assumption~\ref{ass:spectralD}, $\partial e_{iD}^*/\partial S
=\kappa_D[(\mathbf{I}-\beta_D\mathbf{G}^D)^{-1}\mathbf{1}]_i>0$ for all
$i$.  Under Assumption~\ref{ass:spectralER},
$\partial e_{iE}^*/\partial S>0$ and $\partial e_{iR}^*/\partial S>0$ for
all $i$.
\end{proposition}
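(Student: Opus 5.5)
The plan is to exploit the fact that both equilibria, (\ref{eq:eD_star}) and (\ref{eq:eE_eR_star}), are linear in $S$. The derivatives are therefore just the $S$-free coefficient vectors, and the whole task reduces to showing that those vectors are strictly positive.

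\textbf{Distributive margin.} By Assumption~\ref{ass:spectralD}, $\beta_D\lambda_{\max}(\mathbf{G}^D(\phi))<1$. Since $\mathbf{G}^D$ is symmetric and non-negative, its spectral radius equals $\lambda_{\max}$, so the Neumann series gives
\[
(\mathbf{I}-\beta_D\mathbf{G}^D)^{-1}=\sum_{k\ge0}\beta_D^k(\mathbf{G}^D)^k .
\]
Differentiating (\ref{eq:eD_star}) in $S$ yields $\partial\mathbf{e}_D^*/\partial S=\kappa_D(\mathbf{I}-\beta_D\mathbf{G}^D)^{-1}\mathbf{1}$, which is the displayed formula. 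For strict positivity, I bound the series from below by its $k=0$ term: every later term is entrywise non-negative, so $(\mathbf{I}-\beta_D\mathbf{G}^D)^{-1}\ge\mathbf{I}$. Hence $[(\mathbf{I}-\beta_D\mathbf{G}^D)^{-1}\mathbf{1}]_i\ge1$, and with $\kappa_D>0$ the derivative is at least $\kappa_D>0$ for every $i$.

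\textbf{Expressive--religious block.} The same logic applies. Assumption~\ref{ass:spectralER} together with elementwise monotonicity gives $\rho(\mathbf{B}_{ER}(a,\phi))<1$, so
\[
\mathbf{M}_{ER}^{-1}=\sum_{k\ge0}\mathbf{B}_{ER}^k\ge\mathbf{I}_{2N}.
\]
This needs $\mathbf{B}_{ER}\ge0$, which requires $\gamma_i\ge0$. That is the one implicit point to flag: the paper's discussion treats $\gamma_i$ as a non-negative complementarity and asserts $\mathbf{M}_{ER}$ is an M-matrix, so I would state $\gamma_i\ge0$ explicitly as part of the maintained setup. Then
\[
\partial(\mathbf{e}_E^*,\mathbf{e}_R^*)/\partial S=\mathbf{M}_{ER}^{-1}(\kappa_E\mathbf{1},\Omega\mathbf{1})'\ge(\kappa_E\mathbf{1},\Omega\mathbf{1})'\gg\mathbf{0},
\]
using $\kappa_E,\Omega>0$. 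This delivers both $\partial e_{iE}^*/\partial S>0$ and $\partial e_{iR}^*/\partial S>0$ for every $i$.

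\textbf{Remaining checks.} Two points need confirming. First, interior solutions: because the equilibrium vectors are strictly positive, the non-negativity constraints $e\ge0$ do not bind, so the first-order-condition solutions are the equilibria. Second, uniqueness: it follows from the invertibility just established. The main obstacle is not computational; it is making sure the Neumann-series lower bound is legitimate, namely that the spectral-radius conditions hold uniformly in $(a,\phi)$ and that all blocks are non-negative. Assumptions~\ref{ass:spectralD}--\ref{ass:spectralER} are designed to provide exactly this.
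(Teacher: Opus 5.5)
Your proposal is correct and follows the paper's own argument. It differentiates the closed forms, which are linear in $S$, and uses the Neumann series to bound each inverse below by the identity, so that multiplying by the strictly positive vectors built from $\kappa_D$, $\kappa_E$, $\Omega$ gives strict positivity for every $i$. Your explicit flag that $\gamma_i\ge0$ is needed for $\mathbf{B}_{ER}\ge0$ (and hence for the M-matrix and spectral-monotonicity claims) is a fair tightening of a point the paper leaves implicit.
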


\begin{proof}
Differentiate \eqref{eq:eD_star} and \eqref{eq:eE_eR_star} with respect
to $S$.  By the Neumann series, each inverse is non-negative with every
diagonal entry at least one, so each row sum of the inverse is at least one.
Together with $\kappa_D,\kappa_E,\Omega>0$ this gives strict positivity for
every $i$, with no further connectivity requirement.
\end{proof}

\begin{proposition}[Religious Fractionalization and National Coordination]
Under Assumptions~\ref{ass:networks}--\ref{ass:spectralER} and
$\partial\gamma_i/\partial\phi\le0$, an increase in fractionalization $\phi$
weakly attenuates the disaster response on both national margins:
$\partial^2 e_{iD}^*/\partial S\,\partial\phi\le0$ and
$\partial^2 e_{iE}^*/\partial S\,\partial\phi\le0$ for all $i$.  The
distributive inequality is strict for every $i$ with at least one neighbor
in $\mathbf{G}_0$ (row $i$ of $\mathbf{G}_0$ non-zero) and $\beta_D>0$.  The
expressive margin has an additional attenuation channel through
$\partial\gamma_i/\partial\phi\le0$ that is absent from the distributive
margin; the expressive inequality is strict for every $i$ with a
$\mathbf{G}_0$-neighbor and $\beta_E>0$, or with
$\partial\gamma_i/\partial\phi<0$.
\end{proposition}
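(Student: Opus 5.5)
Since both equilibrium allocations are linear in $S$, I will first eliminate $S$ and reduce the cross-partials to derivatives of resolvents in $\phi$. From \eqref{eq:eD_star}, $\partial e_{iD}^*/\partial S=\kappa_D[(\mathbf{I}-\beta_D\mathbf{G}^D(\phi))^{-1}\mathbf{1}]_i$. By \eqref{eq:eE_eR_star}, $\partial(\mathbf{e}_E^*,\mathbf{e}_R^*)/\partial S=\mathbf{M}_{ER}(a,\phi)^{-1}(\kappa_E\mathbf{1},\Omega\mathbf{1})'$. Both inverses exist and are non-negative uniformly in $\phi$ by Assumptions~\ref{ass:spectralD} and~\ref{ass:spectralER}. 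They are smooth in $\phi$, because $\mathbf{G}^D,\mathbf{G}^E,\mathbf{G}^R$ are affine in $\phi$ by Assumption~\ref{ass:networks}. For $\boldsymbol{\Gamma}$ I take $\gamma_i$ to be differentiable in $\phi$, as the hypothesis $\partial\gamma_i/\partial\phi\le0$ presupposes. The whole argument then rests on the derivative-of-inverse identity $\partial \mathbf{A}^{-1}/\partial\phi=-\mathbf{A}^{-1}(\partial\mathbf{A}/\partial\phi)\mathbf{A}^{-1}$.

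For the distributive margin, set $\mathbf{A}_D=\mathbf{I}-\beta_D(1-c\phi)\mathbf{G}_0$, so that $\partial\mathbf{A}_D/\partial\phi=\beta_D c\,\mathbf{G}_0\ge0$. This gives
\[
\frac{\partial^2 e_{iD}^*}{\partial S\,\partial\phi}=-\kappa_D\beta_D c\,[\mathbf{A}_D^{-1}\mathbf{G}_0\mathbf{A}_D^{-1}\mathbf{1}]_i\le0 .
\]
For strictness, I will use the Neumann-series fact, already invoked in the proof of Proposition~1, that $\mathbf{A}_D^{-1}\ge\mathbf{I}$ entrywise with $\mathbf{A}_D^{-1}\mathbf{1}\ge\mathbf{1}$. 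It follows that $[\mathbf{A}_D^{-1}\mathbf{G}_0\mathbf{A}_D^{-1}\mathbf{1}]_i\ge[\mathbf{G}_0\mathbf{1}]_i$. This is strictly positive whenever row $i$ of $\mathbf{G}_0$ is non-zero, and $\beta_D,c>0$ then give strict negativity.

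For the expressive margin, $\partial\mathbf{M}_{ER}/\partial\phi$ has diagonal blocks $\beta_E c\,\mathbf{G}_0$ and $\beta_R\mathbf{G}_0$, which are non-negative. Its off-diagonal blocks are $-\partial\boldsymbol{\Gamma}/\partial\phi$, also non-negative because $\partial\gamma_i/\partial\phi\le0$. So $\partial\mathbf{M}_{ER}/\partial\phi\ge0$ entrywise. Write $\mathbf{x}=\mathbf{M}_{ER}^{-1}(\kappa_E\mathbf{1},\Omega\mathbf{1})'\gg\mathbf{0}$. Then the derivative of the response vector is $-\mathbf{M}_{ER}^{-1}(\partial\mathbf{M}_{ER}/\partial\phi)\mathbf{x}\le0$, because $\mathbf{M}_{ER}^{-1}\ge0$. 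Reading off the first $N$ coordinates gives the weak inequality for $e_{iE}^*$. The term involving $-\partial\boldsymbol{\Gamma}/\partial\phi$ is exactly the extra attenuation channel that $e_{iD}^*$ lacks, since $e_{iD}$ never enters $\boldsymbol{\Gamma}$.

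The main obstacle is strictness for the expressive margin. Here I again use $\mathbf{M}_{ER}^{-1}\ge\mathbf{I}_{2N}$ from the Neumann series of $\mathbf{B}_{ER}$. This lets me bound the $i$-th coordinate from below by the $i$-th coordinate of $(\partial\mathbf{M}_{ER}/\partial\phi)\mathbf{x}$, which equals $\beta_E c[\mathbf{G}_0\mathbf{x}_E]_i-(\partial\gamma_i/\partial\phi)x_{R,i}$. Both terms are non-negative. Since $\mathbf{x}\gg\mathbf{0}$, the first is strictly positive when $i$ has a $\mathbf{G}_0$-neighbour and $\beta_E>0$ (with $c>0$), and the second is strictly positive when $\partial\gamma_i/\partial\phi<0$. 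The one point to check carefully is that the coordinatewise lower bound $\mathbf{M}_{ER}^{-1}\mathbf{v}\ge\mathbf{v}$ holds for any $\mathbf{v}\ge0$; it does, since $\mathbf{M}_{ER}^{-1}-\mathbf{I}=\sum_{k\ge1}\mathbf{B}_{ER}^k\ge0$.
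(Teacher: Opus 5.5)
Your proposal is correct and follows essentially the same route as the paper. Both arguments apply the derivative-of-inverse identity to $(\mathbf{I}-\beta_D\mathbf{G}^D)^{-1}$ and $\mathbf{M}_{ER}^{-1}$, use entrywise nonnegativity of $\partial\mathbf{M}_{ER}/\partial\phi$, and obtain strictness from the Neumann-series bounds $\mathbf{A}_D^{-1}\ge\mathbf{I}$ and $\mathbf{M}_{ER}^{-1}\ge\mathbf{I}_{2N}$; your explicit coordinate $c\beta_E[\mathbf{G}_0\mathbf{x}_E]_i-(\partial\gamma_i/\partial\phi)\,x_{R,i}$ spells out the expressive strictness step that the paper only sketches.
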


\begin{proof}
For the distributive margin, $\partial\mathbf{e}_D^*/\partial S=
\kappa_D(\mathbf{I}-\beta_D\mathbf{G}^D(\phi))^{-1}\mathbf{1}$ and, by
Assumption~\ref{ass:networks},
$\partial\mathbf{G}^D/\partial\phi=-c\,\mathbf{G}_0$.  The matrix derivative
identity gives
\[
  \frac{\partial}{\partial\phi}
  \frac{\partial\mathbf{e}_D^*}{\partial S}
  =-c\,\beta_D\,\kappa_D\,
  (\mathbf{I}-\beta_D\mathbf{G}^D)^{-1}\mathbf{G}_0
  (\mathbf{I}-\beta_D\mathbf{G}^D)^{-1}\mathbf{1}\le0.
\]
Strictness: by the Neumann series
$(\mathbf{I}-\beta_D\mathbf{G}^D)^{-1}\mathbf{1}\ge\mathbf{1}$, so
$[\mathbf{G}_0(\mathbf{I}-\beta_D\mathbf{G}^D)^{-1}\mathbf{1}]_i\ge
\sum_j g_{0,ij}>0$ whenever row $i$ of $\mathbf{G}_0$ is non-zero, and
premultiplying by $(\mathbf{I}-\beta_D\mathbf{G}^D)^{-1}\ge\mathbf{I}$
preserves strict positivity of entry $i$.

For the expressive margin,
\[
\frac{\partial\mathbf{M}_{ER}}{\partial\phi}
=\begin{pmatrix}
c\,\beta_E\,\mathbf{G}_0 &
-\partial\boldsymbol{\Gamma}/\partial\phi\\
-\partial\boldsymbol{\Gamma}/\partial\phi & \beta_R\,\mathbf{G}_0
\end{pmatrix}\ge0,
\]
because $\partial\mathbf{G}^E/\partial\phi=-c\,\mathbf{G}_0$,
$\partial\mathbf{G}^R/\partial\phi=-\mathbf{G}_0$, and
$\partial\boldsymbol{\Gamma}/\partial\phi\le0$.  Applying
$\partial\mathbf{M}_{ER}^{-1}/\partial\phi=-\mathbf{M}_{ER}^{-1}
(\partial\mathbf{M}_{ER}/\partial\phi)\mathbf{M}_{ER}^{-1}\le0$ to
\eqref{eq:eE_eR_star} gives
$\partial^2\mathbf{e}_E^*/\partial S\,\partial\phi\le0$ and
$\partial^2\mathbf{e}_R^*/\partial S\,\partial\phi\le0$.  The term involving
$\partial\boldsymbol{\Gamma}/\partial\phi$ is the additional symbolic
channel; strictness follows from the same row-sum argument, using
$\mathbf{M}_{ER}^{-1}\ge\mathbf{I}_{2N}$ and the strict positivity of
$S(\kappa_E\mathbf{1}',\Omega\mathbf{1}')'$.
\end{proof}

Note that under Assumption~\ref{ass:networks} the religious margin also
weakly declines with $\phi$ through its own network channel
($\partial\mathbf{G}^R/\partial\phi=-\mathbf{G}_0$).  This implication is
secondary to the paper's empirical focus on national outcomes.

\begin{proposition}[Alignment Amplifies the Expressive Margin]
Let $a$ parameterize state-religion alignment with
$\partial\gamma_i/\partial a\ge0$ for all $i$.  Then
$\partial^2 e_{iE}^*/\partial S\,\partial a\ge0$ and
$\partial^2 e_{iR}^*/\partial S\,\partial a\ge0$ for all $i$, with strict
inequality for every $i$ with $\partial\gamma_i/\partial a>0$.
There is no direct alignment effect on $e_{iD}^*$ because $a$ enters neither
\eqref{eq:foc_D} nor $\mathbf{G}^D$.
\end{proposition}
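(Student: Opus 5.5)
The argument mirrors the expressive half of the previous proposition, with the alignment parameter $a$ in place of $\phi$. Start from the closed form \eqref{eq:eE_eR_star}. Since $S$ enters only as a scalar multiplier, we have
\[
\frac{\partial}{\partial S}\begin{pmatrix}\mathbf{e}_E^*\\\mathbf{e}_R^*\end{pmatrix}
=\mathbf{M}_{ER}(a,\phi)^{-1}\begin{pmatrix}\kappa_E\mathbf{1}\\\Omega\mathbf{1}\end{pmatrix}\equiv\mathbf{x}.
\]
By Assumption~\ref{ass:spectralER} this vector is strictly positive, because $\mathbf{M}_{ER}^{-1}\ge\mathbf{I}_{2N}$ by the Neumann series. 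Next, note that $a$ enters $\mathbf{M}_{ER}$ only through the off-diagonal blocks $-\boldsymbol{\Gamma}(a,\phi)$; the diagonal blocks $\mathbf{I}-\beta_E\mathbf{G}^E(\phi)$ and $\mathbf{I}-\beta_R\mathbf{G}^R(\phi)$ do not depend on $a$. Hence
\[
\frac{\partial\mathbf{M}_{ER}}{\partial a}
=-\begin{pmatrix}\mathbf{0}&\mathbf{D}\\\mathbf{D}&\mathbf{0}\end{pmatrix},\qquad
\mathbf{D}=\mathrm{diag}(\partial\gamma_1/\partial a,\ldots,\partial\gamma_N/\partial a)\ge0,
\]
which is elementwise non-positive.

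Apply the matrix-inverse derivative identity:
\[
\frac{\partial\mathbf{x}}{\partial a}
=-\mathbf{M}_{ER}^{-1}\frac{\partial\mathbf{M}_{ER}}{\partial a}\mathbf{M}_{ER}^{-1}
\begin{pmatrix}\kappa_E\mathbf{1}\\\Omega\mathbf{1}\end{pmatrix}
=\mathbf{M}_{ER}^{-1}\begin{pmatrix}\mathbf{D}\,\mathbf{x}_R\\\mathbf{D}\,\mathbf{x}_E\end{pmatrix},
\]
where $\mathbf{x}_E$ and $\mathbf{x}_R$ are the two $N$-blocks of $\mathbf{x}$. Since $\mathbf{M}_{ER}^{-1}\ge0$, $\mathbf{D}\ge0$ and $\mathbf{x}\gg0$, every entry is non-negative. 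This gives $\partial^2 e_{iE}^*/\partial S\,\partial a\ge0$ and $\partial^2 e_{iR}^*/\partial S\,\partial a\ge0$ for all $i$.

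\textbf{Strictness.} Fix $i$ with $\partial\gamma_i/\partial a>0$. The $i$-th entry of $\mathbf{D}\mathbf{x}_R$ is $(\partial\gamma_i/\partial a)\,x_{R,i}>0$, and likewise the $i$-th entry of $\mathbf{D}\mathbf{x}_E$ is positive. Because $\mathbf{M}_{ER}^{-1}\ge\mathbf{I}_{2N}$ with all other entries non-negative, premultiplication preserves the strict positivity of coordinate $i$ in the $E$-block and of coordinate $N+i$ in the $R$-block. This is the same row-sum argument used in the previous proofs.

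\textbf{Distributive margin.} The claim of no direct effect on $e_{iD}^*$ is immediate by inspection: \eqref{eq:eD_star} depends on $(S,\kappa_D,\beta_D,\mathbf{G}^D(\phi))$ only. By Assumption~\ref{ass:networks}, $\mathbf{G}^D$ depends on $\phi$ alone, so $\partial\mathbf{e}_D^*/\partial a=\mathbf{0}$ and the cross-partial $\partial^2\mathbf{e}_D^*/\partial S\,\partial a$ vanishes.

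\textbf{Main obstacle.} The only delicate point is justifying differentiation of $\mathbf{M}_{ER}^{-1}$ in $a$. This requires $\gamma_i$ to be differentiable in $a$, which is implicit in the stated sign condition, and requires invertibility to hold along the whole path. The latter is guaranteed uniformly in $(a,\phi)$ by the bound $\bar{\boldsymbol{\Gamma}}$ in Assumption~\ref{ass:spectralER}.
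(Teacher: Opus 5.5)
Your proposal is correct and follows the paper's proof essentially step for step. You differentiate $\mathbf{M}_{ER}$ in $a$ to get the non-positive off-diagonal block $-\partial\boldsymbol{\Gamma}/\partial a$, apply the matrix-inverse derivative identity, obtain strictness from $\mathbf{M}_{ER}^{-1}\ge\mathbf{I}_{2N}$ and the strictly positive forcing vector, and settle the distributive margin by noting that neither $a$ nor $\boldsymbol{\Gamma}$ enters $\mathbf{e}_D^*$. Your explicit form $\partial\mathbf{x}/\partial a=\mathbf{M}_{ER}^{-1}(\mathbf{D}\mathbf{x}_R;\,\mathbf{D}\mathbf{x}_E)$ just spells out the paper's ``sandwiching'' step, and, like the paper, you implicitly rely on $\gamma_i\ge0$ so that $\mathbf{B}_{ER}\ge0$ and the Neumann series applies.
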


\begin{proof}
$\partial\mathbf{M}_{ER}/\partial a=\bigl(\begin{smallmatrix}0 &
-\partial\boldsymbol{\Gamma}/\partial a\\
-\partial\boldsymbol{\Gamma}/\partial a & 0\end{smallmatrix}\bigr)\le0$.
By the matrix derivative identity,
$\partial\mathbf{M}_{ER}^{-1}/\partial a\ge0$, and differentiating
\eqref{eq:eE_eR_star} gives
$\partial^2(\mathbf{e}_E^*,\mathbf{e}_R^*)'/\partial S\,\partial a\ge0$.
For strictness, the off-diagonal block of
$-\partial\mathbf{M}_{ER}/\partial a$ has diagonal entry
$\partial\gamma_i/\partial a>0$; sandwiching by
$\mathbf{M}_{ER}^{-1}\ge\mathbf{I}_{2N}$ and multiplying the strictly
positive vector $S(\kappa_E\mathbf{1}',\Omega\mathbf{1}')'$ keeps the
corresponding entries of the derivative strictly positive.
Equation~\eqref{eq:eD_star} contains neither $a$ nor
$\boldsymbol{\Gamma}$, so the distributive response has no direct alignment
effect.
\end{proof}

The equilibrium structure also clarifies, without a formal proposition, why
the descriptive religiosity-attenuation diagnostics in
Section~\ref{sec:mech_evidence} are natural: in
\eqref{eq:eE_eR_star} the expressive margin is the only national margin
coupled to religious effort (through $\boldsymbol{\Gamma}$), while the
distributive margin \eqref{eq:eD_star} is not.  I deliberately do not state
this as a comparative-static proposition: conditioning on an equilibrium
object ($e_{iR}^*$) is a regression exercise, not a primitive of the model,
and its empirical counterpart conditions on an outcome that is itself
affected by the treatment---the bad-control problem flagged in
Section~\ref{sec:mech_evidence}.  Those diagnostics are therefore read as
descriptive patterns consistent with the coupling structure, not as a
derived prediction.

\subsection{Empirical Mapping and Limits}

The model maps to the mechanism tests in Section~\ref{sec:mechanisms}.

\begin{enumerate}
  \item \emph{Family null.}  The implication $e_{iF}^*=M_F(1)=0$ does not
    say that families are unimportant.  It says that a survey item such as
    family importance should not be the margin most responsive to covariate
    earthquake risk, because the same shock affects many family members at once.
  \item \emph{Jobs priority as a boundary response.}  Proposition~1 predicts
    that greater chronic-risk salience raises distributive national priority through
    $S\kappa_D$ and the national coordination multiplier
    $(I-\beta_DG^D)^{-1}$.  This margin does not require religious coping.
  \item \emph{Religious coping as meaning inside religious communities.}
    Proposition~1 also predicts a positive response to chronic-risk salience through
    $S\Omega$ and the intra-denominational religious network $G^R$.
  \item \emph{State-religion alignment.}  Proposition~3 predicts that alignment
    amplifies expressive nationalism and religious coping by raising
    $\gamma_i$, while leaving the distributive equation directly unchanged.
  \item \emph{Religious fractionalization.}  Proposition~2 allows religious
    fractionalization to weaken both national margins through the common
    coordination-friction channel ($\partial G^D/\partial\phi=
    \partial G^E/\partial\phi=-c\,G_0$), but the expressive margin has an
    additional channel through $\partial\gamma_i/\partial\phi\le0$.  This is
    why the empirical test focuses on whether fragmentation matters
    especially for expressive national attachment and religious
    identification.
  \item \emph{Religiosity-attenuation diagnostics.}  The model offers no
    formal counterpart to the regression exercise of conditioning on
    religiosity: $e_{iR}^*$ is an equilibrium outcome, and its empirical
    analogue is a bad control.  The coupling structure---only the expressive
    margin is tied to religious effort through $\boldsymbol{\Gamma}$---makes
    the descriptive attenuation patterns in Section~\ref{sec:mech_evidence}
    natural, but they are not derived comparative statics.
\end{enumerate}

\medskip
\noindent\emph{Limits.}  Two limits bear emphasis.  First, the model is
static in chronic-risk salience $S$: it characterizes the long-run, cross-sectional
allocation under chronic covariate risk (selectivity and the alignment and
fractionalization moderation) and generates \emph{no} temporal
predictions: it does not model how $S$ evolves after a realized earthquake or
the secular weakening of the association across survey waves (Table~\ref{tab:temporal_heterogeneity}).  The realized-event response is best
read as a comparative static in $S$---a level shift when a local event makes the
chronic risk salient (Section~\ref{sec:mech_explanation})---rather than
model-implied dynamics, and I do not use it to validate the expressive
(alignment and fractionalization) comparative statics.  Second, the comparative statics on alignment and fractionalization follow
from the sign restrictions $\partial\gamma_i/\partial a\ge0$ and
$\partial\gamma_i/\partial\phi\le0$, which are disciplined by the institutional
literature but assumed rather than derived; the expected-network mapping
$\mathbf{G}(\phi)$ of Assumption~\ref{ass:networks} is likewise a modeling
choice---a mean-field reduction of the random denomination
assignment---rather than an estimated object.  
The model is therefore best read as a consistency device for the cross-sectional
mechanism, not as a structural account of dynamics.


\clearpage
\end{document}